\newcommand{\law}{\ensuremath{\mathcal{L}}}
\newcommand{\F}{\mathcal{F}}
\newcommand{\bb}[1]{\mathbb{#1}}
\newcommand{\pr}{\mathbb{P}}
\renewcommand{\b}{\beta}
\renewcommand{\r}{\rho}
\renewcommand{\t}{\theta}
\newcommand{\E}{\mathbb{E}}
\newcommand{\e}{\epsilon}
\newcommand{\ignore}[1]{}
\newcommand{\BEQA}{\begin{eqnarray}}
\newcommand{\EEQA}{\end{eqnarray}}
\newcommand{\nn}{\nonumber}
\newtheorem{theorem}{Theorem}
\newtheorem{lemma}[theorem]{Lemma}
\newtheorem{corollary}[theorem]{Corollary}
\newtheorem{definition}{Definition}
\newtheorem{assumption}{Assumption}
\begin{document}

\title{A Mean Field Game Approach to\\ Scheduling in Cellular Systems}
\author{\IEEEauthorblockN{Mayank~Manjrekar\IEEEauthorrefmark{1},
Vinod~Ramaswamy\IEEEauthorrefmark{1},
Vamseedhar~Reddyvari\IEEEauthorrefmark{1}, and
Srinivas~Shakkottai\IEEEauthorrefmark{1}}\\
\IEEEauthorblockA{{\small{\IEEEauthorrefmark{1}Dept. of ECE, Texas A\&M University, Email: \{mayankm, vinod83, vamseedhar.reddyvaru, sshakkot\}@tamu.edu}}}
}


\maketitle
\begin{abstract}
We study auction-theoretic scheduling in cellular networks as a means to enable such value declarations.  Our analysis is based on using the idea of mean field equilibrium (MFE).  Here, agents model their opponents through a distribution over their action spaces and play the best response.  The system is at an MFE if this action is itself a sample drawn from the assumed distribution.  In our setting, the agents are smart phone apps that generate service requests, experience waiting costs, and bid for service from base stations.  We show that if we conduct a second-price auction at each base station, there exists an MFE that would schedule the app with the (weighted) longest queue at each time.  The result suggests that auctions can attain the same desirable results as queue-length-based scheduling with full information.  We present results on the asymptotic convergence of a system with a finite number of agents to the mean field case, and conclude with simulation results illustrating the ease of computation of the MFE.

\end{abstract}

\section{Introduction}\label{section:Introduction}

There has recently been a rapid increase in the use of smart hand-held devices for Internet access.  These devices are supported by cellular data networks, which carry the packets generated by apps running on these smart devices.  These apps can be modeled as queues that arrive when the user starts the app, and depart when the user terminates that app.  Packets generated by an app are buffered in a queue corresponding to that app.  Queueing delays impact the quality of user experience (QoE) based on the app being used. 
Users move around cells that each has a base station, and scheduling a particular user provides service to the queue that represents his/her currently running app.  User interest could shift from app to app, regardless of whether or not there are buffered packets.  Hence, a queue might terminate and be replaced by a new one even if there are jobs waiting for service.



An important problem in cellular data networks is that for scheduling, \emph{i.e.,} determining which queues receive service at each time instant.  Most work on scheduling has focused on the case of a finite number of infinitely long lived flows, with the objective being to maximize the total throughput.  A seminal piece of work under this paradigm introduced the so-called \emph{max-weight algorithm}  \cite{Tass}.  Here, the drift of a quadratic Lyapunov function is minimized by maximizing the queue-length weighted sum of  acceptable schedules.
 Later work (\emph{e.g.}, \cite{LinShr04,ErySri06,NeeMod05})  has used a similar approach in a variety of network scenarios.  If queues arrive and depart, then a natural scheduling policy in the single server case is a \emph{Longest-Queue-First (LQF)} scheme, in which each server serves the longest of the queues requesting service from it.  LQF has many attractive properties, such a minimizing the expected value of the longest queue in the system.  

The above approach assumes that the queue length values are given to the scheduler, and that it is aware of the function that maps the queue-length to cost on QoE of queueing.  However, while the downlink queue lengths would naturally be available at a cellular base station, the only way to obtain uplink queue information is to ask the users themselves.  However, a larger value of queue length results in a higher probability of being scheduled under all the above policies, implying an incentive to lie.  


An appealing idea is to use a pricing scheme to inform scheduling decisions for cellular data access.  These prices could be in the form of tokens issued by the cellular service provider that are used as currency in the service market.  An example of a pricing approach is presented in  \cite{HaSen12}, which describes an experimental trial of a system in which day-ahead prices are announced to users, who then decide on whether or not to use their 3G service based on the price at that time. However, these prices have to be determined through trial and error.  Can we determine prices by using an auction?

Our key objective is to design an incentive compatible scheduling scheme that behaves in a (weighted) LQF-like fashion.  We consider a system in which each app bids for service from the base station that it is currently connected to.  The auction is conducted in a second-price fashion, with the winner being the app that bids highest, and the charge being the value of the second highest bid.  It is well known that such an auction promotes truth-telling \cite{krishna2009auction}.  Would the scheduling decisions resulting from such auctions resemble that of LQF?  Would conducting such an auction repeatedly over time with queues arriving and departing result in some form of equilibrium?

\subsection*{Mean Field Games}

We investigate the existence of an equilibrium using the theory of Mean Field Games (MFG) \cite{LasLio07,TemBou09,adlakha2011equilibria,BorSun11,XuHaj12,IyeJoh14,LiBha15}.  In MFG, the players assume that each opponent would play an action drawn \emph{independently} from a fixed distribution over its action space.  The player then chooses an action that is the best response against actions drawn in this fashion.  We say that the system is at Mean Field Equilibrium (MFE) if this best response action is itself a sample drawn from the assumed distribution.   

The MFG framework offers a structure to approximate so-called Perfect Bayesian Equilibrium (PBE) in dynamic games.  PBE requires each player to keep track of their beliefs on the future plays of every other opponent in the system, and play the best response to that belief.   This makes the computation of PBE intractable when the number of players is large.   
The MFG framework simplifies computation of the best response, and often turns out to be asymptotically optimal.

Work on MFGs has mostly focused on showing the existence, accuracy and stability of MFE \cite{LasLio07,TemBou09,adlakha2011equilibria,BorSun11}.   In the space of queueing systems, some recent work considers the game of sampling a number of queues and joining one \cite{XuHaj12}.  
However, ours is a scheduling problem in queueing system interacting with an auction system, which we believe is unique.

In the space of applications, Iyer \emph{et al.} \cite{IyeJoh14} study advertisers competing via a second price auction for spots on a webpage.  The bid must lie in a finite real interval, and the winner can place an ad on the webpage.  With time, the advertisers learn about the value of winning (probability of making a sale).   Li \emph{et al.} \cite{LiBha15} consider the problem of mechanism design for truthful state revelation (number of packets at each station) in a wireless D2D streaming system.  Their main result is to generalize the Groves mechanism to the mean field regime.  Both use some version of fixed point theorem to show existence of the MFE.

Neither of the above consider the use of auctions for service scheduling in queueing systems, which is the basis of our problem.   In our setup, the state is the queue with arrivals and departures, and we allow bids to lie in the full positive real line.  These considerations result in significant technical work to show existence and characterize the MFE.   In an earlier version of this work \cite{ManRam14}, we presented a preliminary version of our work without proofs.  The current work highlights the methodological contributions.

\subsection*{Overview of Paper}

We introduce the Mean Field Game (MFG) in Section~\ref{section:game}.  Here, a selected agent (app) has a belief $\rho$ about the bid distribution of the other agents, and assumes that their bids will be drawn independently from this distribution.  The state of the agent is its current queue length, and it faces a per-time step cost that is a function of its queue length, which models the impact on QoE of queueing delay.   The agent must place a bid based on the belief and its current state and belief about other agents.  

We consider the problem of determining the cost minimizing bid function and the corresponding value function as a Markov Decision Process in Section~\ref{sec:opt-bid}.   We show that the Bellman operator corresponding to the MDP is a contraction mapping with a unique minimum, implying that value iteration would converge to the best response bid.  Further, we show that the best response bid is monotone increasing in queue length.  We call the bid distribution across agents that results from playing the best responses as $\gamma.$

We next prove the existence of the MFE in Sections~\ref{sec:exmfe}--\ref{sec:mfe-proof}  by verifying the conditions of the Schauder Fixed Point Theorem.  We need to show that the mapping between the assumed bid distribution $\rho$ to the resultant bid distribution $\gamma$ is continuous, and that the space in which $\gamma$ lies is contained in a compact subset of the space from which $\rho$ is drawn.   In order to do this, we first show that the mapping between $\r$ and best response bid function is continuous, and then show that the map between $\r$ and the state (queue length) distribution is continuous.  Putting these together yields the required continuity conditions.  We then verify the conditions of the Arzel\`{a}-Ascoli Theorem for showing compactness.

We show in Section~\ref{sec:approx} that the MFE in our case is an asymptotically accurate approximation of a PBE.   The result follows from the fact that any finite subset of agents is unlikely to have interacted with any of the others as the number of agents becomes large.  Finally, we present simulation results in Section~\ref{sec:sims}, showing that MFE computation is straightforward.

\emph{Discussion:}  In the case of a single cost function (homogeneous QoE for all apps) the best response bid function is monotone increasing in the queue length regardless of $\rho.$  This implies that the service regime corresponding to MFE is identical to LQF (or a weighted version if we have different QoE classes for apps).  Further, our simulations suggest that if the base stations were to compute the empirical bid distribution and return it to the users, the eventual bid distribution would be the MFE.   Thus, the desirable properties of LQF are a natural result of  auction-based scheduling, while the queue length distribution would be that generated by LQF.

\section{Mean Field Game Model}\label{section:game}
We consider a system consisting of $N$ cells and $NM$ agents (apps), which are randomly permuted in these cells at each time instant, with each cell having exactly $M$ agents\footnote{Note that our results are essentially unchanged if $M$ is an independent random variable with finite support.}.  The model is consistent on the likely evolution of the 5G cellular system, wherein we expect a large number of small, dense cells and much user mobility across different cells.  The mobility model is identical to the basic framework used in work on mobile wireless networks \cite{Gross02}.  Each cell contains a base station, which conducts a second price auction to choose which agent to serve.   Each agent must choose its bid in response to its state and its belief over the bids of its competitors.

\Cref{fig:diagram} illustrates the MFG approximation, which is accurate in the limit as $N$ becomes large.  An MFG is described from the perspective of a single candidate agent $i$, which assumes that the actions of all its competitors are drawn independently from some distribution.   The asymptotic accuracy of the independence assumption follows from a standard argument on the \emph{propagation of chaos} whose details are provided in Section~\ref{sec:approx}.  In \Cref{fig:diagram}, the auction (shown as blue/dark tiles) and the queue dynamics  (shown as beige/light tiles).
 Since we focus on a single agent, we do not need to specify its identity explicitly, unless we wish to compare its actions with those of other agents.  Hence, we will drop the index $i$ where possible for ease of notation.
\begin{figure}[ht]
\begin {center}
\vspace{-0.12in}
\includegraphics[width=3.4in]{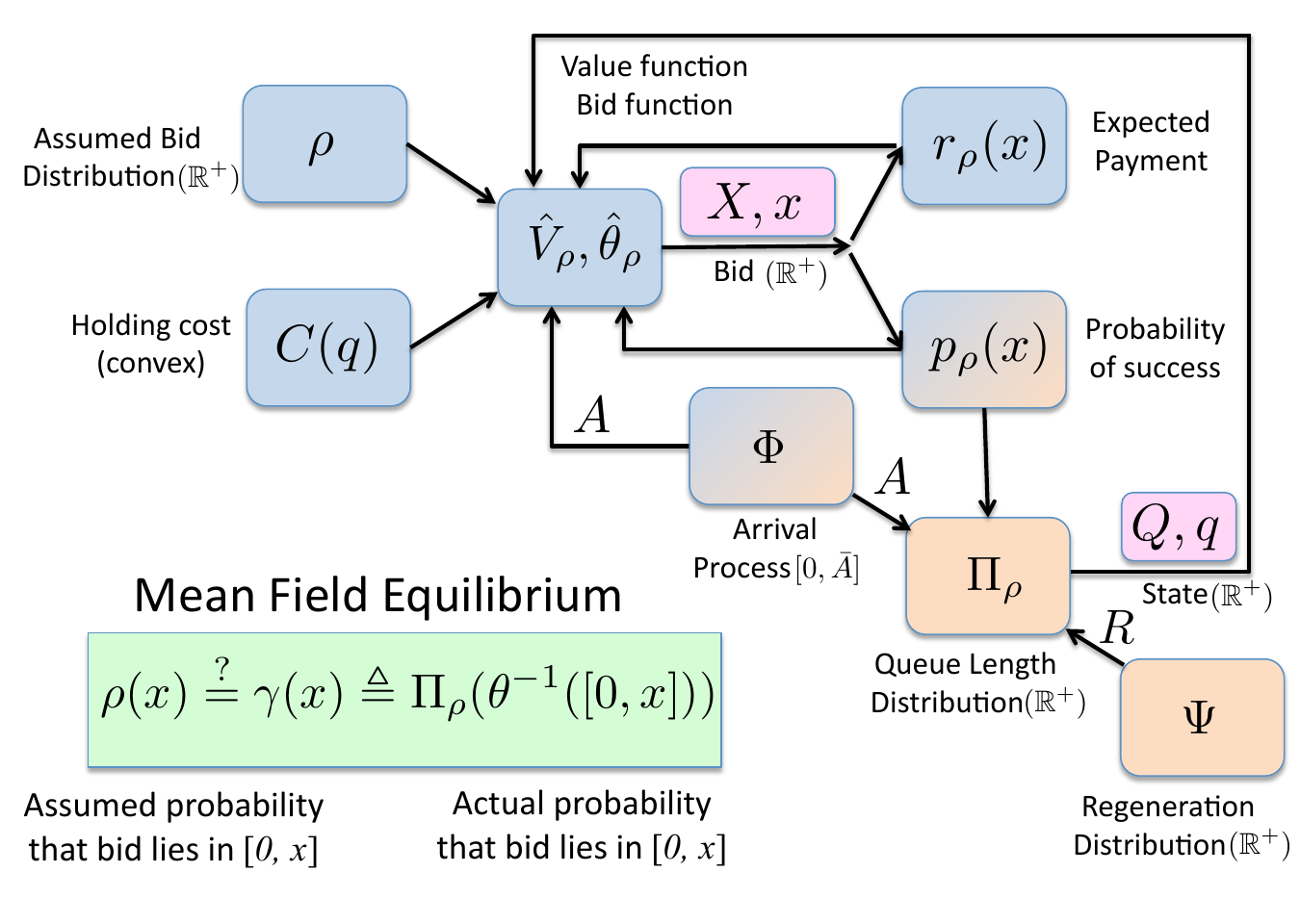}
\vspace{-0.15in}
\caption{The game consists of an auction part (blue/dark tiles) and a queue dynamics part (beige/light tiles). The system is at  MFE if the distribution of the bid $X$ is equal to the assumed bid distribution $\rho.$ }
\label{fig:diagram}
\vspace{-0.16in}
\end{center}
\end{figure}

\emph{Auction System:}  At each time step $k$, the agent of interest competes in a second price auction against $M-1$ other agents, whose bids are assumed to be independently drawn from a continuous, finite mean (cumulative) bid distribution $\rho,$ with support $\mathbb{R}^+.$  The state of the agent is its current queue length $q$ (the random variable is represented by $Q$).  The queue length induces a holding cost $C(q),$ where $C(.)$ is a strictly convex and increasing function.   The cost function could be one of a finite set of cost functions, modeling the impact on QoE of queueing for the currently running app for that agent.   However, since the analysis is identical for one or a finite set of cost functions, we focus on the single cost function in the analysis below.  We will discuss the (very minor) changes that result with multiple cost functions in Section~\ref{sec:concl}.

 Suppose that the agent bids an amount $x \in \mathbb{R}^+.$  The outcomes of the second price auction are that the agent would obtain one unit of service with probability $p_\r(x)$ and would have to pay an expected amount of $r_{\rho}(x)$ when all the other bids are drawn independently from $\rho.$  Further, the queue has future job arrivals according to distribution $\Phi,$ with the random job size being denoted by $A.$  Finally, the app can terminate at any time instant with probability $1-\beta.$   Based on these inputs, the agent needs to determine the value of its current state $\hat{V}_{\rho}(q),$ and the best response bid to make $x=\hat{\theta}_{\rho}(q)$.   The assumption that only a single unit of service is provided at each base station is for simplicity of notation, and our results are unchanged if we are allowed to choose some $\tilde{M} < M$ agents as winners in each auction.  The mechanism followed would be a $\tilde{M}+1^{th}$ price auction in that case.


\emph{Queueing System:} The queueing dynamics are driven by the arrival process $\Phi$ and the probability of obtaining a unit of service being $p_\r(x)$ as described above.  When the user terminates an app, he/she immediately starts a fresh app, \emph{i.e.,}  a new queue takes the place of a departing queue.  The initial condition of this new queue $R$ is drawn from a regeneration distribution $\Psi,$ whose support is $\mathbb{R}^+.$  The invariant distribution associated with this queueing system (if it exists) is denoted $\Pi_\r.$

We make the following assumptions.
\begin{assumption}\label{assum:arrival}
At each time $k,$ the arrivals $\{A_{k}\}$ are i.i.d random variables distributed according to $\Phi$. We assume that $A_{k} \in [0, \{A\}],$ where $\bar{A}$ is finite.  Also, these random variables have a bounded density function, $\phi$ ({i.e.,} $||\phi\|<c_\phi,$ where $||.||$ is the sup norm).
\end{assumption}

\begin{assumption}\label{assum:regen}
The regeneration values $\{R_{k}\}$ are i.i.d random variables distributed according to $\Psi,$ and they have a bounded density $\psi$ ({i.e.,} $\|\psi\|<c_\psi,$ where $||.||$ is the sup norm).
\end{assumption}

\begin{assumption}\label{assum:cost}
The holding cost function $C: \bb{R}^{+} \mapsto \bb{R}^{+}$ is continuous, increasing and strictly convex. We also assume that $C$ is $O(q^m)$ for some integer $m$.
\end{assumption}
The polynomial form above is for technical reasons, but is not very restrictive since many convex functions can be approximated quite well.

\emph{Mean Field Equilibrium:}  The probability that the agent's bid lies in the interval $[0,x]$ is equal to the probability that the agent's queue length lies in some set whose best response is to bid between $[0,x].$  Thus, the probability of the bid lying in the interval $[0,x]$ is $\Pi_\rho(\hat{\theta}_{\rho}^{-1}([0,x])),$ which we define as $\gamma(x).$  According to the assumed (cumulative) bid distribution, the probability of the same event is $\rho(x).$  If $\r(x) = \gamma(x)$, it means that the assumed bid distribution is consistent with the best response bid distribution, and we have an MFE.

\subsection{Agent's decision problem}

Let the candidate be agent~$i$.  Suppose that the belief over the bid of a random agent has cumulative distribution $\rho$. We assume that $\r \in \mathcal{P}$ where,  $$\mathcal{P} = \{G| G~\mbox{is a continuous c.d.f}, \int (1- G(x)) dx < E  \},$$
 for some $E < \infty$, to be defined later.  Besides, its current state, the information available with the agent about the market at any time prior to the auction only includes the following:
\begin{enumerate}
 \item The bids it made in each previous auction from last regeneration.
 \item The auctions that it won.
 \item The payments made for the auctions won.
\end{enumerate}
Let $H_{i,k}$ be the history vector containing the above information available to agent~$i$ at time $k$.  Suppose that the random variable representing the bid made by agent $i$ at time $k$ is denoted by $X_{i,k},$ with the realized value being $x_{i,k}.$   Also, let $\bar{X}_{-i,k} = \max_{j \in M_{i,k}} X_{j,k},$ represent the maximum value of $M-1$  draws from the distribution $\rho.$  Thus, $\bar{X}_{-i,k}$ is the value of the highest opposing bid.  The agent's decision problem is to choose a \emph{bid function} $\theta_i,$ which maps its available information to a bid at each time $x_{i,k}.$

Since the time of regeneration $T_i^k$ is a geometric random variable, the expected cost of agent $i$ can be written as
\begin{align}\label{eq:mfvalue}
 V_{i,\r}(H_{i, k}; \theta_i)=\mathbb{E}\left[\sum_{t=k}^\infty \b^t [C(Q_{i,t})+r_\r(X_{i,t})]\right],
\end{align}
where the expectation is over future state evolutions.  By replacing the belief with $\rho$, we have made an agent's decision problem independent of other agents' strategies. Hence, we represent the cost by $V_{i,\r}(H_{i,k};\theta_i)$.    Also, $r_\r(x)=\mathbb{E}[\bar X_{-i,k}\mathbf{I}\{\bar X_{-i,k}\leq x\}]$ is the expected payment when $i$ bids $x$ under the assumption that the bids of other agents are distributed according to $\r$.  Hence, given $\r$, the win probability in the auction is
\begin{align}\label{eq:succprob}
 p_\r(x)=\pr(\bar{X}_{-i,k}\leq x)=\r(x)^{M-1}.
\end{align}
The expected payment when bidding $x$ is
\BEQA\label{eq:exppayment}
 r_\r(x)=\E[\bar{X}_{-i,k}\mathbf{I}\{\bar{X}_{-i,k}\leq x\}]\nonumber\\
= xp_\r(x)-\int_0^xp_\r(u)du.
\EEQA


The state process $Q_{i,k}$ is Markov and has a transition kernel
\begin{align}\label{eq:trprobso}
& \pr(Q_{i, k+1}\in B|Q_{i,k}=q, X_{i,k} = x)= \nonumber\\
&\qquad \b p_\r(x)\pr((q-1)^++A_{k}\in B) \\
& \qquad+\b (1-p_\r(x))\pr(q+A_{k}\in B) + (1-\b)\Psi(B),\nonumber
\end{align}
where $B \subseteq R^{+}$ is a Borel set and $x^+\triangleq\max(x,0)$. Recall that $A_{k}\sim\Phi$ is the arrival between $(k)^{th}$ and $(k+1)^{th}$ auction and $\Psi$ is density function of the regeneration process. In the above expression, the first term corresponds to the event that agent wins the auction at time~$k,$ while the second corresponds to the event that it does not.  The last term captures the event that the agent regenerates after auction~$k$.   The agent's decision problem can be modeled as an infinite horizon discounted cost MDP. It can be shown that there exists an optimal Markov deterministic policy for our MDP  \cite{strauch1966negative}.  Then, from \eqref{eq:mfvalue}, the optimal value function of the agent is
\BEQA\label{eq:optvalue}
\hat{V}_{i,\r}(q)= \inf_{\theta_i \in \Theta} \mathbb{E}\left[\sum_{t=1}^\infty \b^t [C(Q_{i,t})+r_\r(X_{i,t})]\left|Q_{i,0} = q\right.\right],
\EEQA
where $\Theta$ is the space of Markov deterministic policies.

Note that user index is redundant in the above expression as we are concerned with a single agent's decision problem. In future notations, we will omit the user subscript~$i$.

\subsection{Invariant distribution}
Given cumulative bid distribution $\rho$ and a Markov policy $\theta \in \Theta$, the transition kernel given by \eqref{eq:trprobso} can be re-written as,
\begin{align}\label{eq:trprobs}
& \pr(Q_{k+1}\in B|Q_{k}=q)= \nonumber\\
&\qquad \b p_\r(\theta(q))\pr((q-1)^++A_{k}\in B) \\
& \qquad+\b (1-p_\r(\theta(q)))\pr(q+A_{k}\in B) + (1-\b)\Psi(B).\nonumber
\end{align}
Then, we have an important result in the following lemma:
\begin{lemma}\label{lemma:stdist}
The Markov chain described by the transition probabilities in \eqref{eq:trprobs} is positive Harris recurrent and has a unique invariant distribution.
\end{lemma}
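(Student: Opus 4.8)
The plan is to read a uniform minorization (Doeblin) condition straight off the regeneration term in \eqref{eq:trprobs} and then let standard Markov chain theory do the rest. The key observation is that the summand $(1-\b)\Psi(B)$ is present for \emph{every} starting state $q$ and is independent of the policy $\t$. Since the other two summands are nonnegative, we immediately obtain, for all $q \in \bb{R}^+$ and all Borel $B \subseteq \bb{R}^+$,
\begin{align*}
\pr(Q_{i,k+1} \in B \mid Q_{i,k} = q) \geq (1-\b)\,\Psi(B).
\end{align*}
Because $0 < \b < 1$, this is a minorization with constant $\delta := 1-\b > 0$ and minorizing probability measure $\Psi$, holding uniformly in $q$. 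In other words, the entire state space $\bb{R}^+$ is a small set and the chain satisfies Doeblin's condition.

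First I would record the easy consequences. Since $\Psi$ has a density supported on $\bb{R}^+$, any $B$ with $\Psi(B) > 0$ satisfies $\pr(Q_{i,k+1}\in B\mid Q_{i,k}=q) \geq (1-\b)\Psi(B) > 0$ from every $q$ in a single step, so the chain is $\Psi$-irreducible and aperiodic. Combined with the uniform small-set property, the chain is then positive Harris recurrent with a unique invariant probability measure by the standard uniform-ergodicity theorem attached to Doeblin's condition (Meyn and Tweedie).

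Alternatively---and this is the route I would actually write out, since it is self-contained and also furnishes the geometric convergence needed later---I would exploit the splitting
\begin{align*}
\pr(Q_{i,k+1}\in B\mid Q_{i,k}=q) = (1-\b)\,\Psi(B) + \b\, R(q,B),
\end{align*}
where $R(q,\cdot)$ is the (policy-dependent) probability kernel assembled from the two no-regeneration terms. This decomposition admits an explicit coupling: run two copies of the chain from arbitrary initial states with common randomness, and at each step force both, with probability $1-\b$, to regenerate to the \emph{same} draw from $\Psi$, after which they coincide forever. The coupling time is geometric with parameter $1-\b$, giving $\|P^n(q,\cdot)-P^n(q',\cdot)\|_{TV}\leq \b^n$ for all $q,q'$. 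Equivalently, $\mu \mapsto \mu P$ is a $\b$-contraction on the space of probability measures in total variation, so the Banach fixed point theorem delivers a unique stationary distribution $\Pi_\r$ together with geometric ergodicity.

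The reassuring point is that there is essentially no obstacle: the ephemeral-queue/regeneration structure makes the whole space a small set, so neither a Foster--Lyapunov drift argument nor any control over the (possibly unbounded) queue growth between regenerations is required, and the conclusion holds for \emph{every} policy $\t \in \Theta$ and every $\r \in \mathcal{P}$---which is exactly what we will need when later varying $\r$. The only points deserving care are stating the minorization uniformly in $q$ and checking that $R(q,\cdot)$ is genuinely a probability measure (it is the convex combination, with weights $p_\r(\t(q))$ and $1-p_\r(\t(q))$, of the laws of $(q-1)^+ + A_k$ and $q + A_k$), which guarantees the splitting is valid and that total mass is one.
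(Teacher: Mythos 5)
Your proposal is correct and matches the paper's proof exactly: the paper likewise reads the uniform minorization $\pr(Q_{k+1}\in B\mid Q_k=q)\geq(1-\b)\Psi(B)$ directly off the regeneration term in \eqref{eq:trprobs} and invokes Chapter~12 of Meyn and Tweedie to conclude positive Harris recurrence and uniqueness of the stationary distribution. Your additional coupling argument is a valid self-contained alternative, but the core reasoning is the same Doeblin-condition approach the paper uses.
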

\begin{proof}
From \eqref{eq:trprobs} we note that, $\pr(Q_{k+1}\in B|Q_k=q)\geq (1-\b)\Psi(B),$ where $0<\b<1$ and $\Psi$ is a probability measure. The result then follows from results in Chapter 12, Meyn and Tweedie \cite{meyn2009markov}.
\end{proof}
We denote the unique invariant distribution by $\Pi_{\r,\t}$.

\subsection{Mean field equilibrium}

As described in the Introduction, the mean field equilibrium requires the consistency check that the bid distribution $\gamma$ induced by the invariant distribution $\Pi_{\r,\t_\r}$ should be equal to the bid distribution conjectured by the agent, i.e., $\r$. 
Thus, we have the following definition of MFE:
\begin{definition}[Mean field equilibrium]
Let $\rho$ be a bid distribution and $\theta_{\r}$ be a stationary policy for an agent. Then, we say that $(\rho, \theta_{\rho})$ constitutes a mean field equilibrium if
\begin{enumerate}
\item $\theta_{\rho}$ is an optimal policy of the decision problem in (\ref{eq:optvalue}), given bid distribution $\rho$; and
\item $ \r(x)=\gamma(x)\triangleq \Pi_{\r}(\t_{\r}^{-1}([0,x])), \forall x \in \bb{R^{+}}$, where $\Pi_\r=\Pi_{\r,\t_\r}$.
\end{enumerate}
\end{definition}
Note that the game theoretic definition of the MFE considers the existence of an invariant distribution at a fixed time as the number of agents becomes asymptotically large.  In keeping with extending the ideas of a Bayesian Nash Equilibrium to the system with a large number of agents, the definition does not require the occupancy distribution to converge to the invariant distribution from an arbitrary initial condition as time becomes large \cite{IyeJoh14,LiBha15,adlakha2011equilibria}.  The approach may be contrasted with the stochastic systems literature that often studies the mean field in the case where both the number of controllers and time go to infinity (in some order), but the control policy is fixed (not a best response), and the objective is to show that interchanging limits of time and number of particles produces the same steady state distribution \cite{BenBou08}.\\

A main result of this work is showing the existence of an MFE.  

\section{Properties of Optimal Bid Function}\label{sec:opt-bid}
The decision problem given by \eqref{eq:optvalue} is an infinite horizon, discounted Markov decision problem. The optimality equation or Bellman equation corresponding to the decision problem is
\begin{align}\label{eq:bellman}
&\hat{V}_\r(q) = C(q) + \beta \E_A (\hat{V}_\r(q+A))+  \inf_{x \in R^{+}}\left[r_{\r}(x)\right.\nonumber\\
 & \left. - p_\r(x)\beta \E_A\left(\hat{V}_\r(q+A) - \hat{V}\r((q-1)^++A)\right)\right],
\end{align}
where $A\sim \Phi,$ and we use the notation $\max (0,z)=z^+.$ Note that the decision problem above is independent of the regeneration distribution $\Psi,$ since the game simply ends at any time with probability $1-\beta$ from the agent's perspective.  However, from a system perspective, the  Markov chain describing the state transition is correctly represented by \eqref{eq:trprobs}.

Define the set of functions
\begin{align}
\mathcal{V}= \left\{   f:\bb{R}^+\mapsto\bb{R}^+: \sup_{q\in \bb{R}^+}\left|\frac{f(q)}{w(q)}\right|<\infty \right\},
\end{align}
where $w(q) = \max\{C(q), 1\}$. Note that $\mathcal{V}$ is a Banach space with \emph{$w$-norm}, $$\|f\|_w=\sup_{q\in \bb{R}^+}\left|\frac{f(q)}{w(q)}\right|<\infty.$$  Also, define the operator $T_{\r}$ as
\begin{align}\label{eq:T}
&(T_{\r}f)(q)=C(q)+\beta \E_A f(q+A)+\inf_{x\in\bb{R}^+}\left[r_{\r}(x) \right.\nonumber\\
&\left. -p_{\r}(x)\b(\E_A (f(q+A)-f((q-1)^++A)))\right],
\end{align} where $f \in \mathcal{V}$.  It is straightforward to show that the infimum in the above operator occurs at
\BEQA
\beta \Delta f(q)^+,\label{eq:opt-bid}
\EEQA
 where $\Delta f(q) = \E_A (f(q+A)-f((q-1)^++A)).$  Then, substituting from (\ref{eq:succprob}), (\ref{eq:exppayment}) and (\ref{eq:opt-bid}), (\ref{eq:T}) can be rewritten as
\begin{align}\label{eq:Tnew}
 (T_{\r}f)(q) = C(q)+\beta \E_A f(q+A) - \int_{0}^{\beta \Delta f(q)^+} p_{\r}(u) du.
\end{align}

The following lemma characterizes the optimal solution.
\begin{lemma}\label{lem:optimalityMDP}
Given a cumulative bid distribution $\r$,
\begin{enumerate}
\item There exists a $j\in\bb{N}$ such that $T_\r^j:\mathcal{V}\rightarrow\mathcal{V}$ is a contraction mapping. Hence,  there exists a unique $f^*_{\r} \in \mathcal{V}$ such that $T_{\r}f^*_{\r} = f^*_{\r}$, and for any $f \in \mathcal{V}$, $T^n_{\r}f\rightarrow f^*_{\r}$ as $n\rightarrow\infty$.
\item The fixed point $f^*_{\r}$ of operator $T_{\r}$ is the unique solution to the optimality equation (\ref{eq:bellman}), i.e., $f^*_{\r} = \hat{V}_{\r}$.
\item Let, $\hat{\theta}_{\r}(q) = \beta \Delta \hat{V}_\r(q)^+.$ Then, $\hat{\theta}_{\r}$ is an optimal policy.
\end{enumerate}
\end{lemma}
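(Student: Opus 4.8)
The plan is to establish part (1) by a fixed-point argument on the Banach space $(\mathcal{V},\|\cdot\|_w)$, and then to read off parts (2) and (3) from standard dynamic-programming verification. First I would check that $T_\r$ maps $\mathcal{V}$ into itself. Set $\kappa \triangleq \sup_{q}\E_A w(q+A)/w(q)$; I would argue $\kappa<\infty$ because arrivals satisfy $A\le\bar A$ (\Cref{assum:arrival}) and $C$ is increasing with $C=O(q^m)$ (\Cref{assum:cost}), so $w(q+\bar A)/w(q)$ tends to $1$ as $q\to\infty$ and stays bounded near $q=0$ since $w$ is floored at $1$. Using $p_\r\le 1$, the bound $\Delta f(q)\le \E_A f(q+A)$, and $|f(q)|\le\|f\|_w w(q)$, every term of \eqref{eq:Tnew} is then $O(w(q))$, so $\|T_\r f\|_w<\infty$.

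Second, and this is the crux, I would establish a contraction. Bounding $|\inf-\inf|$ by $\sup|\cdot|$ in the pre-optimization form of $T_\r$ gives, for $h=f-g$,
\[ |(T_\r f - T_\r g)(q)| \le \b\,\sup_x\big|\E[h(Q')\,|\,q,x]\big| \le \b\,\E_A w(q+A)\,\|h\|_w, \]
since $Q'$ equals $(q-1)^++A$ or $q+A$ and $w$ is increasing. This yields a one-step modulus $\b\kappa$, which need not lie below $1$, so the hard part is to recover a genuine contraction. I would therefore work at the level of the iterate $T_\r^n$: iterating the above bound gives $|(T_\r^n f - T_\r^n g)(q)|\le \b^n\,\E[w(Q_n)\,|\,Q_0=q]\,\|h\|_w$, and because service removes at most one unit while arrivals are bounded by $\bar A$, the worst case is $Q_n\le q+n\bar A$, whence $\E[w(Q_n)\,|\,q]\le w(q+n\bar A)\le \kappa_n\,w(q)$ with $\kappa_n\triangleq\sup_q w(q+n\bar A)/w(q)$. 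Since $C$ is increasing and $O(q^m)$ while $w$ is floored at $1$, one checks $\kappa_n=O(n^m)$, so $\b^n\kappa_n\to 0$ and $T_\r^n$ is a contraction for all large $n$. By the generalized Banach fixed-point theorem, $T_\r$ has a unique fixed point $f^*_\r\in\mathcal{V}$ with $T_\r^n f\to f^*_\r$ for every $f\in\mathcal{V}$, which is part (1).

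For part (2) I would run a verification argument. By construction $f^*_\r=T_\r f^*_\r$ solves the optimality equation \eqref{eq:bellman} within $\mathcal{V}$. Interpreting $T_\r^n\mathbf{0}$ as the optimal $n$-horizon discounted cost, the same estimate (now in the form $\b^t\,\E[w(Q_t)\,|\,q]\to 0$) shows the truncation tails vanish, so $T_\r^n\mathbf{0}\to\hat V_\r$ pointwise; since also $T_\r^n\mathbf{0}\to f^*_\r$ by part (1), uniqueness forces $f^*_\r=\hat V_\r$, and this is the unique $w$-bounded solution of \eqref{eq:bellman}.

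For part (3), recall it was already shown in \eqref{eq:opt-bid} that the infimum defining $T_\r$ at state $q$ is attained at $x=\b\Delta f(q)^+$. Taking $f=\hat V_\r$, the stationary action $\hat\theta_\r(q)=\b\Delta\hat V_\r(q)^+$ attains the minimum in \eqref{eq:bellman} at every $q$; by the standard MDP verification theorem, together with the existence of an optimal Markov deterministic policy already noted via \cite{strauch1966negative}, this Bellman-minimizing stationary policy is optimal, completing the proof. I expect the genuine difficulty to be entirely concentrated in the contraction step, specifically the uniform control of $\E[w(Q_n)\,|\,q]/w(q)$ that forces $\b^n\kappa_n\to0$; parts (2) and (3) are then routine.
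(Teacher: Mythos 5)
Your proposal is correct and follows essentially the same route as the paper: the paper proves the lemma by showing $T_\r$ maps $\mathcal{V}$ into itself and then verifying the conditions of Theorem~6.10.4 of Puterman \cite{puterman1994markov}, and your self-contained argument (the weighted-norm bound $\E_A w(q+A)\leq \kappa\, w(q)$, the $J$-stage contraction via $\b^n\kappa_n\to 0$, convergence of value iteration, and the verification step identifying the fixed point with $\hat V_\r$ and the Bellman minimizer $\b\Delta\hat V_\r(q)^+$ as an optimal stationary policy) is precisely the content of that cited theorem, inlined rather than invoked. In particular, your key finiteness conditions on $\kappa$ and $\kappa_n$ are exactly Puterman's weighting assumptions that the paper's verification also rests on, so the two proofs coincide in substance.
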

\begin{proof}
The proof is similar to Theorem~$8.3.6$ in \cite{hernandez1999}. An exception to be noted here is that the action space in our case is not a compact set which violates Assumption~$8.3.1(a)$ in \cite{hernandez1999}. However, this assumption can be overridden if the statement of Lemma~$8.3.8 (a)$ in \cite{hernandez1999} holds true.  This applies to our case since, as derived in (\ref{eq:opt-bid}), a minimizer exists for the infimum operator in (\ref{eq:T}) for every $q.$    Further, Theorem~$8.3.6$ is specified with $j=1$ (a one-step contraction). Hence, we replace Assumption~$8.3.2(b)$ with an appropriate condition to obtain a $j-$step contraction. Please refer to Appendix~\ref{sec:optimalityMDP} for details.
\end{proof}

\begin{corollary}\label{cor:optbid}
An optimal policy of the agent's decision problem \eqref{eq:optvalue} is given by
$$\hat{\theta}_{\r}(q) = \beta \E_{A}\left[\hat{V}_\r(q+A) - \hat{V}_\r((q-1)^++A)\right].$$
\end{corollary}

We now establish that $\hat{V}_{\r}$ and $\hat{\t}_{\r}$ are continuous and increasing functions.
\begin{lemma}\label{lemma:rcontinuity}
Given a cumulative bid distribution function $\r$
\begin{enumerate}
 \item $\hat{V}_{\r}$ is a continuous increasing function.
 \item $\hat{\t}_{\r}$ is a continuous strictly increasing function.
\end{enumerate}
\end{lemma}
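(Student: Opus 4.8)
The plan is to exhibit a subset of $\mathcal{V}$ that is closed in the $w$-norm, is mapped into itself by the Bellman operator $T_\rho$, and carries exactly the structure we want; the conclusion then follows from Lemma~\ref{lem:optimalityMDP}, which guarantees $T_\rho^n f \to \hat{V}_\rho$ in $w$-norm for every $f\in\mathcal{V}$. Concretely I would work with
\begin{align*}
\mathcal{S} = \{ f \in \mathcal{V} : f \text{ continuous, increasing, and } \Delta f \text{ increasing}\}.
\end{align*}
This set is non-empty (it contains $C$, since strict convexity of $C$ makes $\Delta C$ strictly increasing, and $C\in\mathcal{V}$ because $C\le w$) and closed in $(\mathcal{V},\|\cdot\|_w)$: $w$-norm convergence is uniform on compacts, so continuity and monotonicity pass to limits, and since $A\in[0,\bar{A}]$ the map $f\mapsto\Delta f$ is continuous under such convergence, so ``$\Delta f$ increasing'' survives limits as well. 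If $T_\rho(\mathcal{S})\subseteq\mathcal{S}$, then running value iteration from $C$ keeps every iterate in $\mathcal{S}$ and hence places $\hat{V}_\rho$ in $\mathcal{S}$, which is the first claim.

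Establishing $T_\rho(\mathcal S)\subseteq\mathcal S$ splits into three parts. Continuity of $T_\rho f$ is immediate from \eqref{eq:Tnew}, since $C$, $q\mapsto\E_A f(q+A)$, $\Delta f$, and $u\mapsto\int_0^u p_\rho$ are all continuous. For monotonicity it is cleanest to use the pre-infimum form obtained from \eqref{eq:bellman},
\begin{align*}
(T_\rho f)(q) = C(q) + \inf_{x}\Big\{ & r_\rho(x) + \beta(1-p_\rho(x))\E_A f(q+A) \\
& + \beta p_\rho(x)\E_A f((q-1)^+ + A)\Big\};
\end{align*}
evaluating the bracket at the minimizer chosen for the larger state, and using that $C$, $\E_A f(q+A)$ and $\E_A f((q-1)^+ + A)$ are non-decreasing while $\beta(1-p_\rho)$ and $\beta p_\rho$ are non-negative, shows $T_\rho f$ is increasing.

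The crux is that $\Delta(T_\rho f)$ is increasing. Writing $\Theta(q)=\beta\,\Delta f(q)\ge 0$ (non-negative because $f$ is increasing) and using \eqref{eq:Tnew}, I would split
\begin{align*}
\Delta(T_\rho f)(q) = \E_A\big[C(q+A)-C((q-1)^+ + A)\big] + (II) - (III),
\end{align*}
where $(II)$ gathers the $\beta\E_A f$ terms and $(III)$ the payment terms $\int_0^{\Theta}p_\rho$. The first bracket is strictly increasing by strict convexity of $C$. The obstacle is that $(III)$ appears with a negative sign and is itself increasing, so it must be absorbed by $(II)$. The key manipulation is to combine them before bounding: reducing $(q-1)^+$ against the inner arrivals and using $\beta\E_{A'}f(y+A')=\Theta(y)+\beta\E_{A'}f((y-1)^+ + A')$, one finds that $(II)-(III)$ is the expectation of $R(\Theta(q+A))$ plus a non-decreasing function of $q$ (with an extra additive constant when $q<1$), where $R(t)=t-\int_0^t p_\rho(u)\,du$ satisfies $R'=1-p_\rho\ge 0$ because $p_\rho\le 1$. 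Since $R$ and $\int_0^{(\cdot)}p_\rho$ are non-decreasing and $\Theta$ is non-decreasing, $(II)-(III)$ is increasing in $q$, so $\Delta(T_\rho f)$ is increasing and $T_\rho(\mathcal S)\subseteq\mathcal S$.

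Finally, for the second claim, $\hat{V}_\rho\in\mathcal S$ is increasing, so $\Delta\hat{V}_\rho\ge 0$ and $\hat{\theta}_\rho(q)=\beta\,\Delta\hat{V}_\rho(q)^+=\beta\,\Delta\hat{V}_\rho(q)$; continuity of $\hat{\theta}_\rho$ then follows from continuity of $\hat{V}_\rho$ and boundedness of $A$. For strict monotonicity I would apply the same splitting once more to $f=\hat{V}_\rho=T_\rho\hat{V}_\rho$: the term $\E_A[C(q+A)-C((q-1)^+ + A)]$ is strictly increasing while $(II)-(III)$ is non-decreasing, so $\Delta\hat{V}_\rho$ is strictly increasing and hence so is $\hat{\theta}_\rho$. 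The main difficulty throughout is the negatively-signed payment term in the crux step; everything else is routine once the cancellation embodied in $R(t)=t-\int_0^t p_\rho$ (together with $p_\rho\le 1$) is spotted.
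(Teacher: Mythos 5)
Your proposal is correct and takes essentially the same route as the paper: both propagate the class of continuous, increasing functions with $\Delta f$ increasing through value iteration $T_\rho^n f \to \hat{V}_\rho$, use the inf-form of the operator for monotonicity, and derive strict monotonicity of $\hat{\theta}_\rho$ from strict convexity of $C$ via $\Delta C$ applied at the fixed point; your cancellation $R(t)=t-\int_0^t p_\rho(u)\,du$ is precisely the paper's rearrangement of the payment terms into $\int (1-p_\rho(u))\,du$. The only cosmetic differences are your explicit closed invariant set $\mathcal{S}$ (the paper argues the limit inherits the properties directly) and the one-step bootstrap at $f=\hat{V}_\rho=T_\rho\hat{V}_\rho$ for strictness, which matches the paper's final inequality $\Delta\hat{V}_\rho(q)-\Delta\hat{V}_\rho(q')\geq \Delta C(q)-\Delta C(q')>0$.
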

\begin{proof}
Let $f\in\mathcal{V}$. 
Suppose $f$ is a continuous monotone increasing function. We first prove that $T_\r f$ is also  continuous monotone increasing function. Since, $T^n_\r f\rightarrow \hat{V}_{\r}$ according to Statement~2 of Lemma \ref{lem:optimalityMDP}, we conclude that $\hat{V}_\r$  also has the same property.

Let $q>q'$. Then,
\begin{align*}
&T_\r f(q)-T_\r f(q')= C(q)-C(q') \nonumber\\
&+ \b\E_A(f(q+A)-f(q'+A))\nonumber\\
& +\inf_x[r_\r(x)-\b p_\r(x)\E_A(f(q+A)-f((q-1)^++A))]\nonumber\\
&-\inf_x[r_\r(x)-\b p_\r(x)\E_A(f(q'+A)-f((q'-1)^++A))] \nonumber\\
&\stackrel{(a)}{\geq}\b\E_A(f(q+A)-f(q'+A)) \nonumber\\
&\quad +\b\inf_x\left[p_\r(x)\E_A(f(q'+A)-f((q'-1)^++A) \right. \nonumber\\
&\qquad\qquad\left.-f(q+A)+f((q-1)^++A))\right] \nonumber\\
&\geq \b\min\left\{\E_A(f(q+A)-f(q'+A)), \right.\nonumber\\
&\quad\quad \left. \E_A(f((q-1)^++A)-f((q'-1)^++A))\right\} \stackrel{(b)}{\geq} 0,
\end{align*}
where (a) follows from the assumption that $C(.)$ is an increasing function, and (b) follows from the assumption that $f(.)$ is an increasing function.

To prove that $T_\r f$ is continuous consider a sequence $\{q_n\}$ such that $q_n\rightarrow q$. Since $f$ is a continuous function, $f(q_n+a)\rightarrow f(q+a)$. Then, by using dominated convergence theorem, we have $\E_Af(q_n+A)\rightarrow\E_Af(q+A)$ and $\E_Af((q_n-1)^++A)\rightarrow\E_Af((q-1)^++A)$. 
Also, $\Delta f(q_n) \geq 0$ as $f$ is an increasing function. Then, from (\ref{eq:Tnew}), we get that
\begin{align}
 T_\r f(q_n)& =C(q_n)+\b\E_Af(q_n+A) -  \int_{0}^{\beta \Delta f(q_n)} p_{\r}(u) du \nonumber\\
 & \rightarrow C(q)+\b\E_Af(q+A)-\int_{0}^{\beta \Delta f(q)} p_{\r}(u) du
 =T_\r f(q).\nonumber
\end{align}
Hence, $T_{\rho}f$ is a continuous function. This yields Statement~$1$ in the lemma.

Now, to prove the second part, assume that $\Delta f$ is an increasing function. First, we show that $\Delta T_\r f$ is an increasing function. Let $q > q'$. From (\ref{eq:Tnew}), for any $a < \bar{A}$ we can write
\begin{align*}
 &(T_\r f)(q+a)- (T_\r f) ((q-1)^++a)\nonumber\\
&\quad\quad - (T_\r f)(q'+a)+ (T_\r f)((q'-1)^++a)\nonumber\\
=&  C(q+a)-C((q-1)^++a) \nonumber\\
& - C(q'+a)+C((q'-1)^++a)\\
&  +\b\E_Af(q+a+A)-\b\E_Af((q-1)^++a+A)\\
&  - \b\E_Af(q'+a+A)+\b\E_Af((q'-1)^++a+A)\\
& -\int_{\b\Delta f(q'+a)}^{\b\Delta f(q+a)}p_{\r}(u)~du+\int_{\b \Delta f((q'-1)^++a)}^{\b \Delta f((q-1)^++a)}p_{\r}(u)~du\\
=& C(q+a)-C((q-1)^++a)\nonumber\\
 &- C(q'+a)+C((q'-1)^++a)\\
 & +\b\E_Af((q+a-1)^++A)-\b\E_Af((q-1)^++a+A)\\
 & - \b\E_Af((q'+a-1)^++A)+\b\E_Af((q'-1)^++a+A)\\
 & +\int_{\b\Delta f(q'+a)}^{\b\Delta f(q+a)}1-p_{\r}(u)~du+\int_{\b\Delta f((q'-1)^++a)}^{\b\Delta f((q-1)^++a)}p_{\r}(u)~du
\end{align*}
It can be easily verified that
$\E_A(f(q+a-1)^++A)-\E_A(f(q-1)^++a+A)- \E_A(f(q'+a-1)^++A)+\E_A(f(q'-1)^++a+A) \geq 0$ as $f$ is increasing (due to Statement~$1$ of this lemma). From the assumption that $\Delta f$ is increasing, the last two terms in the above expression are also non-negative.
Now, taking expectation on both sides, we obtain $\Delta T_{\r}f(q) - \Delta T_{\r} f(q')\geq \Delta C(q)-\Delta C(q')>0$. Therefore, from Statements $2$ and $3$ of Lemma~\ref{lem:optimalityMDP}, we have $$\hat\t_{\r}(q)-\hat\t_{\r}(q') = \Delta \hat{V}_{\r}(q) - \Delta \hat{V}_{\r}(q') \geq\Delta C(q)-\Delta C(q')>0.$$
Here, the last inequality holds since $C$ is a strictly convex increasing function.
\end{proof}


\section{Existence of MFE}\label{sec:exmfe}
The main result showing the existence of MFE is as follows.
\begin{theorem}
There exists an MFE $(\r,\hat{\t}_{\r})$ such that
\[ \r(x) = \gamma(x)\triangleq \Pi_{\r}\left(\hat{\theta}_{\r}^{-1}[0,x]\right), \forall x \in \mathbb{R}^{+}. \]
\end{theorem}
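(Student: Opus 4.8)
The plan is to realize the MFE as a fixed point of the self-map $\Gamma:\mathcal{P}\to\mathcal{P}$ defined by $\Gamma(\r)=\gamma$, where $\gamma(x)=\Pi_\r(\hat\t_\r^{-1}([0,x]))$, and to invoke the Schauder Fixed Point Theorem on a suitable convex compact subset of $\mathcal{P}$. Observe that $\gamma$ is simply the law of the bid $X=\hat\t_\r(Q)$ when $Q$ is drawn from the stationary queue distribution $\Pi_\r$ of Lemma~\ref{lemma:stdist}; the well-definedness of $\hat\t_\r^{-1}$ as a genuine inverse is guaranteed by Lemma~\ref{lemma:rcontinuity}, which shows $\hat\t_\r$ is continuous and strictly increasing. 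A fixed point $\r=\gamma$ is exactly the consistency condition in the definition of MFE.

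First I would check that $\Gamma$ is a self-map and that $\mathcal{P}$ is convex. Convexity is immediate, since a convex combination of continuous c.d.f.s is again a continuous c.d.f.\ and the constraint $\int(1-\r)\,dx<E$ is linear, hence convex, in $\r$. For the self-map property I must verify that $\gamma$ is a continuous c.d.f.\ with $\int_0^\infty(1-\gamma(x))\,dx<E$. Continuity of $\gamma$ follows because $\hat\t_\r$ is strictly increasing and $\Pi_\r$ admits a density inherited from the bounded densities $\phi,\psi$ of Assumptions~\ref{assum:arrival}--\ref{assum:regen} through the kernel \eqref{eq:trprobs}, so $X=\hat\t_\r(Q)$ carries no atoms. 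The moment bound $\int(1-\gamma)\,dx=\E_{\Pi_\r}[\hat\t_\r(Q)]<E$ is the crucial structural estimate; here I would run a Lyapunov/drift argument on \eqref{eq:trprobs}, exploiting the bounded arrivals $A\le\bar A$, the regeneration term $(1-\b)\Psi$, and the polynomial growth $C(q)=O(q^m)$ of Assumption~\ref{assum:cost}, to bound the stationary moments of $Q$, and hence of the bid, uniformly over $\r$.

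Next I would establish continuity of $\Gamma$ by factoring it as $\r\mapsto\hat\t_\r\mapsto\Pi_\r\mapsto\gamma$. For $\r\mapsto\hat\t_\r$, since $\hat V_\r$ is the unique fixed point of the contraction $T_\r$ (Lemma~\ref{lem:optimalityMDP}) and $T_\r$ depends continuously on $\r$ through $p_\r(x)=\r(x)^{M-1}$ and $r_\r$, a perturbation argument for contractions with a common contraction modulus gives continuity of $\r\mapsto\hat V_\r$ in the $w$-norm, whence $\hat\t_\r=\b\Delta\hat V_\r$ is continuous. For $\r\mapsto\Pi_\r$, the minorization $\pr(Q_{k+1}\in B\mid Q_k=q)\ge(1-\b)\Psi(B)$ from the proof of Lemma~\ref{lemma:stdist} is a uniform Doeblin condition, so the chain is uniformly ergodic and $\Pi_\r$ depends continuously on its transition kernel; since the kernel varies continuously with $\r$ and with $\hat\t_\r$, the composition is continuous. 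Pushing $\Pi_\r$ forward through the continuous strictly increasing $\hat\t_\r$ then yields continuity of $\r\mapsto\gamma$, in the topology of weak convergence (equivalently, given the equicontinuity below, uniform convergence of c.d.f.s).

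Finally I would verify compactness of $\overline{\Gamma(\mathcal{P})}$ via Arzel\`a--Ascoli together with tightness. The images $\gamma$ are uniformly bounded in $[0,1]$ and equicontinuous: each has a density bounded uniformly in $\r$, because $\Pi_\r$ has a uniformly bounded density and the pushforward through the strictly increasing $\hat\t_\r$ preserves this bound, furnishing a common Lipschitz constant. The moment bound $\int(1-\gamma)\,dx<E$ from the self-map step gives tightness via Markov's inequality, uniformly controlling the tails and upgrading locally uniform convergence to global uniform convergence on $\mathbb{R}^+$. Passing to the closed convex hull $K=\overline{\mathrm{conv}}\,\Gamma(\mathcal{P})\subseteq\mathcal{P}$, which is then compact and convex with $\Gamma(K)\subseteq K$, Schauder's theorem produces a fixed point $\r=\gamma$, i.e.\ an MFE. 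I expect the main obstacle to be the two uniform-in-$\r$ estimates --- the stationary moment bound and the uniform density/equicontinuity bound --- precisely because bids range over all of $\mathbb{R}^+$ rather than a compact interval; these are automatic in the finite-interval setting of Iyer \emph{et al.} \cite{iyer2011mean} but must here be extracted from the queueing dynamics.
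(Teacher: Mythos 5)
Your architecture coincides with the paper's: the same map, Schauder on $\mathcal{P}$, continuity factored through $\rho\mapsto\hat{\theta}_\rho\mapsto\Pi_\rho\mapsto\gamma$, and Arzel\`a--Ascoli plus tightness for compactness. Your two substitutions --- a Lyapunov/drift bound in place of the paper's stochastic domination by a never-served process $\tilde{Q}_k$ for the uniform stationary moment bound, and a Doeblin/uniform-ergodicity perturbation argument in place of the paper's regenerative representation $\Pi_{\rho,\theta}(B)=\sum_{k\geq 0}(1-\beta)\beta^k\E_{\Psi}(\Upsilon^{(k)}_{\rho,\theta}(B|Q))$ of \Cref{lemma:queuelengthdist} with a Fatou/Portmanteau induction --- are both workable, since each exploits the same minorization $(1-\beta)\Psi$. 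However, two steps as written contain genuine gaps. First, the equicontinuity argument: it is false that pushing a measure with uniformly bounded density through a strictly increasing map ``preserves this bound.'' The density of $\gamma$ at $x=\hat{\theta}_\rho(q)$ is the queue-length density divided by the local slope of $\hat{\theta}_\rho$, so wherever the bid function is nearly flat the density of $\gamma$ blows up; strict monotonicity is qualitative and furnishes no Lipschitz constant at all, let alone one uniform in $\rho$. What rescues the argument is the quantitative, $\rho$-independent lower bound on bid increments, $\hat{\theta}_\rho(q)-\hat{\theta}_\rho(q')\geq \Delta C(q)-\Delta C(q')>0$ for $q>q'$, established in the proof of \Cref{lemma:rcontinuity} from the strict convexity of $C$; this is exactly how the paper bounds $\limsup_{y\uparrow x}(\hat{\rho}(x)-\hat{\rho}(y))/(x-y)\leq K(x)$. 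Note that this only yields a \emph{local} bound $K(x)$ (increments of $\Delta C$ may be small near the origin), which suffices since Arzel\`a--Ascoli requires equicontinuity only on compact sets; your claimed global common Lipschitz constant is both unnecessary and likely unavailable.

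Second, your continuity argument for $\rho\mapsto\hat{V}_\rho$ treats $T_\rho$ as a contraction ``with a common contraction modulus,'' to which a routine perturbation lemma applies. In the $w$-norm with unbounded cost $C$, $T_\rho$ is \emph{not} a one-step contraction: the one-step Lipschitz constant $\hat{K}$ in $\|T_\rho f_1-T_\rho f_2\|_w\leq \hat{K}\|f_1-f_2\|_w$ may exceed $1$, and \Cref{lem:optimalityMDP} (via Puterman's Theorem 6.10.4) delivers only a $j$-stage contraction for some finite $j$. The off-the-shelf fixed-point perturbation bound therefore does not apply directly; it must be replaced by the telescoping estimate over $j$ steps, combining $\|T_{\rho_1}f-T_{\rho_2}f\|_w\leq 2(M-1)K_1\|f\|_w\|\rho_1-\rho_2\|$ with powers of $\hat{K}$ and the $\alpha$-contraction of $T^j_{\rho_1}$, with $\hat{K}$, $j$, $\alpha$ uniform in $\rho$ --- precisely the computation the paper carries out. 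This is not cosmetic: it is one of the two places (alongside the uniform estimates you correctly flag) where the unbounded bid space and unbounded holding cost genuinely depart from the compact-interval setting of Iyer \emph{et al.}, and your proposal passes over it as routine.
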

We first introduce some useful notation.  Let $\Theta = \{ \theta : \bb{R} \mapsto \bb{R}, \sup_{q\in \bb{R}^+}\left|\frac{\t(q)}{w(q)}\right|<\infty \}.$ Note that $\Theta$ is a normed space with $w$-norm.
Also, let $\Omega$ be the space of absolutely continuous probability measures on $\bb{R}^+$. We endow this probability space with the topology of weak convergence. Note that this is same as the topology of point-wise convergence of continuous cumulative distribution functions.

We define $\t^*:\mathcal{P}\mapsto \Theta $ as $(\t^*(\rho))(q) = \hat{\t}_\r(q)$, where $\hat{\t}_\r(q)$ is the optimal bid given by Corollary~\ref{cor:optbid}. It can easily verified that $\hat{\t}_\r \in \Theta$. Also, define the mapping ${\Pi}^*$ that takes a bid distribution $\r$ to the invariant workload distribution $\Pi_{\r}(\cdot)$.  Later, using Lemma~\ref{lemma:omega} we will show that $\Pi_{\r}(\cdot) \in \Omega$. Therefore, ${\Pi}^* : \mathcal{P} \rightarrow \Omega$. Finally, define $\mathcal{F}$ as $(\mathcal{F}(\r))(x) = \gamma(x) = \Pi_{\r} ({\hat{\theta}_\r}^{-1}([0,x]))$. \Cref{lem:PincludesF} will show that $\mathcal{F}$ maps $\mathcal{P}$ into itself. 

Now to prove the above theorem we need to show that $\mathcal{F}$ has a fixed point, i.e $\mathcal{F}(\r) =\r$. 
\begin{theorem}[Schauder Fixed Point Theorem]\label{schauder}
Suppose $\mathcal{F}(\mathcal{P}) \subset \mathcal{P}$. If $\mathcal{F}(\cdot)$ is continuous, and $\mathcal{F}(\mathcal{P})$ is contained in a convex and compact subset of $\mathcal{P}$, then $\mathcal{F}(\cdot)$ has a fixed point.
\end{theorem}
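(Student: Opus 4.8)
The statement is the classical Schauder Fixed Point Theorem, so strictly speaking the plan is not to reprove it but to cite a standard functional-analysis reference; what the theorem buys us is a reduction of the existence of an MFE to three verifiable hypotheses on the self-map $\mathcal{F}$. If a self-contained argument were wanted, I would follow the usual route from Brouwer's theorem to its infinite-dimensional extension. First I would replace $\mathcal{P}$ by the convex compact set $K=\overline{\mathrm{co}}(\mathcal{F}(\mathcal{P}))$ supplied by the hypothesis; since $\mathcal{F}(\mathcal{P})\subset K$ and $K$ is convex, $\mathcal{F}$ restricts to a continuous self-map of $K$, and a fixed point of this restriction is a fixed point of $\mathcal{F}$.

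The core step is to reduce the infinite-dimensional problem to finitely many dimensions. Using compactness of $K$, for each $\epsilon>0$ I would choose a finite $\epsilon$-net $\{y_1,\dots,y_{n_\epsilon}\}\subset K$ and build the associated Schauder projection $P_\epsilon:K\to\mathrm{co}\{y_1,\dots,y_{n_\epsilon}\}$ through a partition of unity weighting by the distances to the net points. The composite $P_\epsilon\circ\mathcal{F}$ then maps the finite-dimensional convex compact polytope $\mathrm{co}\{y_1,\dots,y_{n_\epsilon}\}$ continuously into itself, so Brouwer's Fixed Point Theorem yields $x_\epsilon$ with $x_\epsilon=P_\epsilon(\mathcal{F}(x_\epsilon))$. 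By construction of the projection, $\|x_\epsilon-\mathcal{F}(x_\epsilon)\|=\|P_\epsilon(\mathcal{F}(x_\epsilon))-\mathcal{F}(x_\epsilon)\|\le\epsilon$.

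Finally I would pass to the limit: taking $\epsilon=1/k$ and using compactness of $K$, I would extract a subsequence $x_{1/k_j}\to x^*\in K$; continuity of $\mathcal{F}$ gives $\mathcal{F}(x_{1/k_j})\to\mathcal{F}(x^*)$, and combining this with $\|x_{1/k_j}-\mathcal{F}(x_{1/k_j})\|\to0$ forces $x^*=\mathcal{F}(x^*)$. The delicate point in this classical argument is the construction of the Schauder projection and the verification that it approximates the identity on $K$ uniformly, which is exactly where compactness (the existence of a finite $\epsilon$-net) is indispensable.

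I would emphasize, however, that in the present paper the genuine difficulty is not the theorem itself but the verification of its three hypotheses for $(\mathcal{F}(\r))(x)=\Pi_\r(\hat{\t}_\r^{-1}([0,x]))$: namely the self-mapping property $\mathcal{F}(\mathcal{P})\subset\mathcal{P}$, the continuity of the composite map $\r\mapsto\hat{\t}_\r\mapsto\Pi_\r\mapsto\gamma$ (which I would obtain by chaining the continuity of $\r\mapsto\hat{\t}_\r$ with that of $\r\mapsto\Pi_\r$), and the relative compactness of $\mathcal{F}(\mathcal{P})$ via an Arzel\`{a}--Ascoli and tightness argument that exploits the uniform bound $E$ defining $\mathcal{P}$ together with the bounded densities $\phi$ and $\psi$. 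That verification, rather than the invocation of Schauder, is where I expect the main obstacle to lie.
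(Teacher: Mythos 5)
Your proposal is correct and takes the same approach as the paper: the paper states the Schauder Fixed Point Theorem as a classical result without proof, and---exactly as you observe---the substantive content is the verification of the self-mapping, continuity, and compactness hypotheses for $\mathcal{F}(\r)(x)=\Pi_{\r}(\hat{\t}_\r^{-1}([0,x]))$, which the paper carries out in Section~\ref{sec:mfe-proof} via the continuity of $\r\mapsto\hat{\t}_\r$ and $\r\mapsto\Pi_\r$ and an Arzel\`{a}--Ascoli plus uniform tightness argument. Your optional self-contained sketch (restriction to the compact convex set, Schauder projection onto a finite $\epsilon$-net, Brouwer's theorem on the resulting polytope, and a compactness limit) is the standard textbook proof and is sound.
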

In next section, we show that the mapping $\mathcal{F}$ satisfies the conditions of the above theorem, and hence it has a fixed point. Note that $\mathcal{P}$ is a convex set. Therefore, we just need to show that the other two conditions are satisfied.

\section{MFE Existence: Proof}  \label{sec:mfe-proof}
\subsection{Continuity of the map $\mathcal{F}$}
To prove the continuity of mapping $\mathcal{F}$, we first show that $\t^*$ and ${\Pi}^*$ are continuous mappings. To that end, we will show that for any sequence $\r_{n} \rightarrow \r$ in uniform norm,  we have $\t^*(\r_n) \rightarrow \t^*(\r)$ in $w$-norm and ${\Pi}^*(\r_n) \Rightarrow {\Pi}^*(\r)$ (where $\Rightarrow$ denotes weak convergence). Finally, we use the continuity of $\t^*$ and ${\Pi}^*$ to prove that $\mathcal{F}(\r_n) \rightarrow \mathcal{F}(\r).$

\vspace{0.1in}%
\noindent\emph{Step 1: Continuity of $\t^*$}
\begin{theorem}\label{thm:conttheta}
The map ${\theta}^*$ is continuous.
\end{theorem}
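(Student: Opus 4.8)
The plan is to exploit that, by Corollary~\ref{cor:optbid}, the bid map depends on $\r$ only through the value function, namely $\hat{\t}_\r(q) = \b\,\Delta\hat{V}_\r(q)^+$ with $\Delta\hat{V}_\r(q) = \E_A[\hat{V}_\r(q+A) - \hat{V}_\r((q-1)^++A)]$. Thus continuity of $\t^*$ in the $w$-norm reduces to showing that $\r \mapsto \hat{V}_\r$ is continuous from $(\mathcal{P},\|\cdot\|_\infty)$ into $(\mathcal{V},\|\cdot\|_w)$. Assuming such convergence, set $g_n := \hat{V}_{\r_n} - \hat{V}_\r$; since $x\mapsto x^+$ is $1$-Lipschitz,
\[
|\hat{\t}_{\r_n}(q) - \hat{\t}_\r(q)| \le \b\,\E_A\big[\,|g_n(q+A)| + |g_n((q-1)^++A)|\,\big].
\]
Bounding each term by $\|g_n\|_w\,w(q+\bar{A})$ (using that $w$ is increasing and $A\le\bar{A}$) and dividing by $w(q)$ gives $\|\t^*(\r_n) - \t^*(\r)\|_w \le 2\b\,\kappa\,\|g_n\|_w$, where $\kappa := \sup_{q}\, w(q+\bar{A})/w(q)$. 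So the statement reduces to two facts: (i) $\kappa<\infty$, and (ii) $\|g_n\|_w \to 0$.

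For (i), finiteness of $\kappa$ follows from Assumption~\ref{assum:cost}: since $w(q)=\max\{C(q),1\}\ge 1$ with $C$ convex, increasing and $O(q^m)$, the ratio $w(q+\bar{A})/w(q)$ tends to $1$ as $q\to\infty$ and is bounded on any compact set, hence bounded overall. For (ii) I would use that $\hat{V}_\r$ is the unique fixed point of the contraction $T_\r$ (Lemma~\ref{lem:optimalityMDP}), whose modulus $\lambda<1$ can be taken \emph{uniform} in $\r$ because $\r$ enters $T_\r$ only through $p_\r(\cdot)=\r(\cdot)^{M-1}\in[0,1]$, leaving the $\b$- and $w$-governed contraction structure unchanged. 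Writing $\|g_n\|_w = \|T_{\r_n}\hat{V}_{\r_n} - T_\r\hat{V}_\r\|_w$ and inserting $\pm\,T_{\r_n}\hat{V}_\r$,
\[
\|g_n\|_w \le \lambda\,\|g_n\|_w + \|T_{\r_n}\hat{V}_\r - T_\r\hat{V}_\r\|_w,
\]
so that $\|g_n\|_w \le (1-\lambda)^{-1}\|T_{\r_n}\hat{V}_\r - T_\r\hat{V}_\r\|_w$.

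It remains to estimate the operator difference. By \eqref{eq:Tnew} the distribution $\r$ enters $T_\r$ only through the integral term, and both operators integrate up to the same limit $\b\Delta\hat{V}_\r(q)^+$, so
\[
(T_{\r_n}\hat{V}_\r)(q) - (T_\r\hat{V}_\r)(q) = \int_{0}^{\b\Delta\hat{V}_\r(q)^+}\!\!\big[p_\r(u) - p_{\r_n}(u)\big]\,du.
\]
Factoring $a^{M-1}-b^{M-1}$ for $a,b\in[0,1]$ yields $|p_{\r_n}(u)-p_\r(u)| = |\r_n(u)^{M-1}-\r(u)^{M-1}| \le (M-1)\|\r_n-\r\|_\infty$, whence
\[
|(T_{\r_n}\hat{V}_\r)(q) - (T_\r\hat{V}_\r)(q)| \le (M-1)\,\b\,\Delta\hat{V}_\r(q)^+\,\|\r_n-\r\|_\infty.
\]
Dividing by $w(q)$ and using $\Delta\hat{V}_\r(q)^+ \le \|\hat{V}_\r\|_w\,w(q+\bar{A}) \le \kappa\,\|\hat{V}_\r\|_w\,w(q)$ gives $\|T_{\r_n}\hat{V}_\r - T_\r\hat{V}_\r\|_w \le (M-1)\b\kappa\|\hat{V}_\r\|_w\,\|\r_n-\r\|_\infty$, which tends to $0$ since $\|\hat{V}_\r\|_w<\infty$ is a fixed constant for the limit $\r$. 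Combining with the previous display establishes (ii), and therefore continuity of $\t^*$.

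The main obstacle I anticipate is carrying every estimate out in the $w$-weighted norm rather than the sup norm, since $\Delta\hat{V}_\r(q)$ grows with $q$; the whole argument hinges on the finiteness of $\kappa=\sup_q w(q+\bar{A})/w(q)$ and on the contraction modulus of $T_\r$ being uniform over $\r\in\mathcal{P}$. Both rest squarely on the polynomial growth in Assumption~\ref{assum:cost} and on $p_\r$ taking values in $[0,1]$; verifying the uniform-in-$\r$ modulus rigorously (tracing the dependence on $\r$ through the hypotheses of Theorem~$6.10.4$ of \cite{puterman1994markov}) is the most delicate bookkeeping step.
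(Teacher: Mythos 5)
Your overall architecture is the same as the paper's: reduce continuity of $\t^*$ to continuity of $\r\mapsto\hat{V}_\r$ via the $1$-Lipschitzness of $x\mapsto x^+$, bound the operator difference $\|T_{\r_n}f-T_\r f\|_w$ by a constant times $\|f\|_w\|\r_n-\r\|$ using $|\r_n(u)^{M-1}-\r(u)^{M-1}|\leq (M-1)\|\r_n-\r\|$ (this is exactly the paper's inequality \eqref{eq:B}, and your derivation of it via \eqref{eq:Tnew}, noting that both operators integrate to the same limit $\b\Delta\hat{V}_\r(q)^+$, is correct and clean), and then transfer the operator perturbation to the fixed points. However, there is a genuine gap at the transfer step: you assume $T_\r$ is a \emph{one-step} contraction on $(\mathcal{V},\|\cdot\|_w)$ with modulus $\lambda<1$ uniform in $\r$, and this is precisely what fails in the weighted-norm setting with unbounded cost. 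The natural one-step estimate here is of the form $\|T_\r f_1-T_\r f_2\|_w\leq 3\b\kappa\|f_1-f_2\|_w$ with $\kappa=\sup_q \E_A w(q+A)/w(q)>1$ (one factor from $\b\E_A f(q+A)$ and two from $\Delta f$ through the infimum term), so the modulus can exceed $1$ for $\b$ near $1$ no matter how $\r$ enters; the fact that $p_\r\in[0,1]$ does not rescue a one-step contraction. This is why the paper states only a Lipschitz bound \eqref{eq:A} with ``some large $\hat K$'' for a single application of $T_\r$, and why Lemma~\ref{lem:optimalityMDP} (via Theorem~6.10.4 of Puterman) only guarantees that some \emph{iterate} $T_\r^j$ is an $\alpha$-contraction. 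Your inequality $\|g_n\|_w\leq\lambda\|g_n\|_w+\|T_{\r_n}\hat{V}_\r-T_\r\hat{V}_\r\|_w$ is vacuous when $\lambda\geq 1$.

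The repair is exactly the paper's proof: telescope over $j$ steps, $\|T_{\r_1}^j\hat{V}_{\r_2}-T_{\r_2}^j\hat{V}_{\r_2}\|_w\leq(\hat K^{j-1}+\cdots+1)\|T_{\r_1}\hat{V}_{\r_2}-T_{\r_2}\hat{V}_{\r_2}\|_w$, then apply the $j$-step $\alpha$-contraction as in \eqref{eq:C} to get $(1-\alpha)\|\hat{V}_{\r_1}-\hat{V}_{\r_2}\|_w\leq\|T_{\r_1}^j\hat{V}_{\r_2}-T_{\r_2}^j\hat{V}_{\r_2}\|_w$. Note two bookkeeping points your version elides. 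First, if you insist on evaluating the operator difference at the \emph{fixed} limit $\hat{V}_\r$ (your nicer choice), the contraction must be applied to $T^j_{\r_n}$, so you need $(j,\alpha)$ uniform over $\r$ — a claim you flag but do not establish; the paper instead applies the contraction for one fixed distribution and pays by having $\|\hat{V}_{\r_2}\|_w$ (the varying one) in the bound, which it then absorbs via $\|\hat{V}_{\r_2}\|_w\leq\|\hat{V}_{\r_1}\|_w+\|\hat{V}_{\r_1}-\hat{V}_{\r_2}\|_w$ together with the smallness condition on $\|\r_1-\r_2\|$. Second, your claim that $w(q+\bar{A})/w(q)\to 1$ as $q\to\infty$ does not follow from $C$ being convex, increasing, and $O(q^m)$ alone (an upper growth bound does not control the ratio from below); what is actually needed, and what the paper implicitly uses through Puterman's weighted-norm hypotheses, is just $\kappa=\sup_q w(q+\bar{A})/w(q)<\infty$, which should be verified (or assumed) directly rather than derived from a limit statement.
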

\begin{proof}
Define the map $ V^*:\mathcal{P}\mapsto\mathcal{V}$ that takes $\r$ to $\hat{V}_\r(\cdot)$.   We begin by showing that $\|\hat\t_{\r_1}-\hat\t_{\r_2}\|_w\leq K\|\hat{V}_{\r_1}-\hat{V}_{\r_2}\|_w$, which means that the continuity of the map $V^*$ implies the continuity of the map $\t^*$.

Then we show two simple properties of the Bellman operator.  The first is that for any $\r\in\mathcal{P}$ and $f_1,\ f_2\in\mathcal{V}$,
\begin{align}\label{eq:A}
 \|T_\r f_1-T_\r f_2\|_w\leq \hat K\|f_1-f_2\|_w 
\end{align}
for some large $\hat K$, independent of $\r$.  This result is available in Lemma~\ref{lem:theta-bound}  in Appendix~\ref{sec:step1}.

Second, let $T_{\r_1}$ and $T_{\r_2}$ be the Bellman operators corresponding to $\r_1,\ \r_2 \in \mathcal{P}$ and let $f \in\mathcal{V}$. We show that
\begin{align}
 \|T_{\r_1}f-T_{\r_2}f\|_w  \leq 2(M-1)K_1\|f\|_w\|\r_1-\r_2\|. \label{eq:B} 
 \end{align}
This result is available in Lemma~\ref{lem:bellman-bound}  in Appendix~\ref{sec:step1}.

We then have
\begin{align}
&\|T_{\r_1}^j\hat{V}_{\r_2}-T_{\r_2}^j\hat{V}_{\r_2}\|_w\leq \|T_{\r_1}^j\hat{V}_{\r_2}-T_{\r_1}^{j-1}T_{\r_2}\hat{V}_{\r_2}\|_w \nonumber\\
& \quad\quad+ \|T_{\r_1}^{j-1}T_{\r_2}\hat{V}_{\r_2}-T_{\r_1}^{j-2}T_{\r_2}^2\hat{V}_{\r_2}\|_w+\cdots \nonumber\\
& \quad\quad +\|T_{\r_1}T_{\r_2}^{j-1}\hat{V}_{\r_2}-T_{\r_2}^j\hat{V}_{\r_2}\|_w\nonumber\\
& \leq \hat K^{j-1}\|T_{\r_1}\hat{V}_{\r_2}-T_{\r_2}\hat{V}_{\r_2}\|_w+\cdots \nonumber\\
&\quad\quad +\|T_{\r_1}T_{\r_2}^{j-1}\hat{V}_{\r_2}-T_{\r_2}^j\hat{V}_{\r_2}\|_w\label{eq:D}\\
&\leq (\hat K^{j-1}+\cdots+1)\|T_{\r_1}\hat{V}_{\r_2}-T_{\r_2}\hat{V}_{\r_2}\|_w\nn\\
&\leq 2(M-1)K\|\r_1-\r_2\|(\hat K^{j-1}+\cdots+1)\|\hat{V}_{\r_2}\|_w\label{eq:E}
 \end{align}
Here, (\ref{eq:D}) and (\ref{eq:E}) follow from (\ref{eq:A})  and (\ref{eq:B}), respectively.

Now, let $j$ be such that $T_{\r_1}^j$ is an $\alpha$-contraction, which is guaranteed to exist by \cref{lem:optimalityMDP}. Note that Statement~1 of Lemma \ref{lem:optimalityMDP} implies that such a $j<\infty$ exists.  Then we have
 \begin{align}
\|\hat{V}_{\r_1} &-\hat{V}_{\r_2}\|_w \\
 & =\|T_{\r_1}^j\hat{V}_{\r_1}-T_{\r_2}^j\hat{V}_{\r_2}\|_w\nonumber\\
 &\leq  \|T_{\r_1}^j\hat{V}_{\r_1}-T_{\r_1}^j\hat{V}_{\r_2}\|_w + \|T_{\r_1}^j\hat{V}_{\r_2}-T_{\r_2}^j\hat{V}_{\r_2}\|_w\nonumber\\
 &\implies (1-\alpha)\|\hat{V}_{\r_1}-\hat{V}_{\r_2}\|_w\leq \|T_{\r_1}^j\hat{V}_{\r_2}-T_{\r_2}^j\hat{V}_{\r_2}\|_w\label{eq:C}
 \end{align}

Finally, from (\ref{eq:E}) and (\ref{eq:C}), we get
\begin{align}
 \|\hat{V}_{\r_1}&-\hat{V}_{\r_2}\|_w \\
 & \leq \frac{2(m-1)K(\hat K^{j-1}+\cdots+1)\|\r_1-\r_2\|}{1-\alpha}\|\hat{V}_{\r_2}\|_w\nonumber\\
& \leq \ \frac{2(m-1)K(\hat K^{j-1}+\cdots+1)\|\r_1-\r_2\|}{1-\alpha}\nonumber\\
&\quad\quad \times (\|\hat{V}_{\r_1}\|_w+\|\hat{V}_{\r_1}-\hat{V}_{\r_2}\|_w).\nonumber
 \end{align}
Therefore, if $\frac{2(m-1)K(\hat K^{j-1}+\cdots+1)}{1-\alpha}\|\r_1-\r_2\|<\frac{1}{2}$, then
\begin{align}
&\|\hat{V}_{\r_1}-\hat{V}_{\r_2}\|_w \\
& \qquad \leq \frac{4(m-1)K(\hat K^{j-1}+\cdots+1)}{1-\alpha}\|\hat{V}_{\r_1}\|_w\|\r_1-\r_2\|\nonumber
\end{align}
Hence, the maps $ V^*$ and $\t^*$ are continuous.
\end{proof}

\vspace{0.1in}
\noindent\emph{Step~$2$: Continuity of the map ${\Pi}^*$}\\
Let $\Pi_{\r,\t}(.)$ be the invariant distribution generated by any $\t.$  Recall that ${\Pi}^*$ takes $\rho \in \mathcal{P}$ to probability measure $\Pi_{\r}(.) = \Pi_{\r, \hat{\theta}_{\r}}(.)$.    First, we show that  $\Pi_{\r,\t}(.) \in \Omega,$ where $\Omega$ is the space of absolutely continuous measures  (with respect to Lebesgue measure) on $\bb{R}^{+}$.
\begin{lemma}\label{lemma:omega}
 For any $\r\in\mathcal{P}$ and any $\t\in\Theta$, $\Pi_{\r,\t}(\cdot)$ is absolutely continuous with respect to the Lebesgue measure on $\bb{R}^+$.
\end{lemma}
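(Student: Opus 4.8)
The plan is to exploit the structure of the one-step transition kernel in \eqref{eq:trprobs}, which is a finite mixture of three laws, each of which already possesses a Lebesgue density. Since absolute continuity of \emph{every} one-step kernel $P(q,\cdot)$ transfers to the stationary law through the invariance identity $\Pi_{\r,\t} = \Pi_{\r,\t}\,P$, it suffices to produce a (uniformly bounded) transition density. The optimality of $\t$ plays no role here: the argument goes through for any $\t\in\Theta$, since $\t$ enters only through the scalar $p_\r(\t(q))\in[0,1]$.

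Concretely, for fixed $q$ the three components of \eqref{eq:trprobs} are the law of $(q-1)^+ + A$, the law of $q + A$, and the regeneration law $\Psi$. Because $A\sim\Phi$ has a bounded density $\phi$ (Assumption~\ref{assum:arrival}), a deterministic shift of $A$ again has a density, namely $y\mapsto \phi\bigl(y-(q-1)^+\bigr)$ and $y\mapsto\phi(y-q)$ respectively; and $\Psi$ has bounded density $\psi$ (Assumption~\ref{assum:regen}). Hence the kernel, which I denote $P(q,\cdot)$, admits the density
\[ p(q,y)=\b p_\r(\t(q))\,\phi\bigl(y-(q-1)^+\bigr)+\b\bigl(1-p_\r(\t(q))\bigr)\phi(y-q)+(1-\b)\psi(y), \]
which, since $p_\r(\t(q))\in[0,1]$, is bounded by $\b c_\phi+(1-\b)c_\psi$ uniformly in both $q$ and $\t$. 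Note that no atom can arise, because every transition either convolves with $A$ or resamples from $\Psi$; in particular the case $q\in[0,1]$, where $(q-1)^+=0$, still yields the honest density $\phi(y)$.

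Finally, by Lemma~\ref{lemma:stdist} the stationary distribution $\Pi_{\r,\t}$ exists and is unique, and it satisfies $\Pi_{\r,\t}(B)=\int_{\bb{R}^+}P(q,B)\,\Pi_{\r,\t}(dq)$ for every Borel $B$. Writing $P(q,B)=\int_B p(q,y)\,dy$ and applying Tonelli's theorem (the integrand is nonnegative) to exchange the order of integration gives
\[ \Pi_{\r,\t}(B)=\int_B\Bigl(\int_{\bb{R}^+}p(q,y)\,\Pi_{\r,\t}(dq)\Bigr)\,dy, \]
so $\Pi_{\r,\t}$ has the Lebesgue density $\pi(y)=\int_{\bb{R}^+}p(q,y)\,\Pi_{\r,\t}(dq)$, which inherits the uniform bound $\pi\le\b c_\phi+(1-\b)c_\psi$. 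This establishes absolute continuity, and in fact that $\Pi_{\r,\t}\in\Omega$ with a uniformly bounded density.

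The argument is essentially routine, and there is no serious obstacle; the only points requiring care are the justification of the interchange via Tonelli and the verification that the density bound is uniform in $q$ and $\t$. I would record this uniform bound deliberately rather than merely asserting absolute continuity, since it is precisely the ingredient that will later feed the Arzel\`a--Ascoli / compactness step used to verify the hypotheses of the Schauder theorem for $\mathcal{F}$.
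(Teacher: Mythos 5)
Your proposal is correct and is essentially the paper's own argument: both rest on the invariance identity $\Pi_{\r,\t}=\Pi_{\r,\t}P$ together with the absolute continuity of every component of the one-step kernel \eqref{eq:trprobs} (the shifted copies of $\Phi$ and the regeneration law $\Psi$), the paper running a null-set test on the convolution representation $\Pi_{\r,\t}(B)=\b\int\Phi(B-y)\,d\Pi'(y)+(1-\b)\Psi(B)$ where you instead exhibit the explicit kernel density and integrate it against $\Pi_{\r,\t}$ via Tonelli. The one genuine addition on your side is the uniform density bound $\b c_\phi+(1-\b)c_\psi$, which the paper does not record at this point but which immediately yields its later estimate $\Pi_\r([a,b])\le c\,(b-a)$ used in the equicontinuity step, so keeping it explicit is a sensible choice.
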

 \begin{proof}
$\Pi_{\r,\t}(\cdot)$ can be expressed as the invariant queue-length distribution of the dynamics
\begin{align}\nonumber
   q\rightarrow  \begin{cases}
                        Q'+A & \textrm{with probability } \b \\
                        R &\textrm{with probability }(1- \b),
                       \end{cases}
  \end{align}
where $A\sim\Phi$ and $R\sim\Psi,$ and $Q'$ is a random variable with distribution generated by the conditional probabilities
\begin{align}
 \mathbb{P}(Q'=q|q)=& 1-p_\r(\hat{\t}(q))\nonumber\\
 \mathbb{P}(Q'=(q-1)^+|q)=& p_\r(\hat{\t}(q))  \nonumber
\end{align}
Let $\Pi'$ be the distribution of $Q'$. Then for any Borel set $B$, $\Pi$ can be expressed using the convolution of $\Pi'$ and $\Phi:$
\begin{align}
\Pi_{\r,\t}(B) =\b\int_{-\infty}^\infty \Phi(B-y)d\Pi'(y)+(1-\b)\Psi(B).
\end{align}
If $B$ is a Lebesgue null-set, then so is $B-y\ \ \forall y$. So, $\Phi(B-y)=0$ and $\Psi(B)=0$ and therefore $\Pi_{\r}(B)=0$.
 \end{proof}

We now develop a useful characterization of $\Pi_{\r,\t}$. Let \[\Upsilon_{\r,\t}^{(k)}(B|q)=\pr(Q_k\in B|\textrm{no regeneration,}~Q_0=q)\] be the distribution of queue length $Q_k$ at time $k$ induced by the transition probabilities (\ref{eq:trprobs}) conditioned on the event that $Q_0=q$ and that there are no regenerations until time $k$. We can now express the invariant distribution $\Pi_{\r,\t}(\cdot)$ in terms of $\Upsilon_{\r,\t}^{(k)}(\cdot|q)$ as in the following lemma.

\begin{lemma}\label{lem:queuelengthdist}
 For any bid distribution $\r\in\mathcal{P}$ and for any stationary policy $\t\in\Theta$, the Markov chain described by the transition probabilities in (\ref{eq:trprobs}) has a unique invariant distribution $\Pi_{\r,\t}(\cdot)$. Also $\Pi_{\r,\t}$ and $\Upsilon_{\r,\t}^{(k)}$ are related as follows:
 \begin{align}
  \Pi_{\r,\t}(B)=\sum_{k\geq 0}(1-\b)\b^k\E_{\Psi}(\Upsilon_{\r,\t}^{(k)}(B|Q)),
 \end{align}
 where $\E_{\Psi}(\Upsilon_{\r,\t}^{(k)}(B|Q)) = \int \Upsilon_{\r,\t}^{(k)}(B|q) d\Psi(q)$.
\end{lemma}
\begin{proof}
$\Upsilon_{\r,\t}^{(k)}(B|q)$ is the queue length distribution assuming no regeneration has happened yet, and the regeneration event occurs with probability $\beta$ independently of the rest of the system.  It is then easy to find  $\Pi_{\r,\t}(B)$ in terms of $\Upsilon_{\r,\t}^{(k)}(B|q)$ by simply using the properties of the conditional expectation, and the theorem follows.    Note that in $\E_{\Psi}(\Upsilon_{\r,\t}^{(k)}(B|Q)),$ the random variable is the initial condition of the queue, as generated by $\Psi.$  Full details are available in Appendix~\ref{sec:step2}.
\end{proof}


We shall now prove the continuity of $\Pi^*$ in $\r$.  
\begin{theorem}\label{thm:contPi}
 The mapping $\Pi^*:\mathcal{P}\mapsto\Omega$ is continuous.
\end{theorem}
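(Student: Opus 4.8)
The plan is to show that $\r_n\to\r$ in the uniform norm implies $\Pi_{\r_n}\Rightarrow\Pi_\r$; since weak convergence on $\Omega$ coincides with pointwise convergence of c.d.f.'s, it suffices to prove $\int g\,d\Pi_{\r_n}\to\int g\,d\Pi_\r$ for every bounded continuous $g$. I would start from the series representation of Lemma~\ref{lemma:queuelengthdist},
\[
\int g\,d\Pi_{\r_n}=\sum_{k\geq0}(1-\b)\b^k\,\E_\Psi\!\left[\int g\,d\Upsilon^{(k)}_{\r_n}(\cdot\,|Q)\right],
\]
and reduce the whole statement to the single-kernel claim that, for each fixed $k$ and each fixed starting point $q$, $\Upsilon^{(k)}_{\r_n}(\cdot\,|q)\Rightarrow\Upsilon^{(k)}_{\r}(\cdot\,|q)$. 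Granting that claim, each summand's inner integral is bounded by $\|g\|_\infty$ and converges pointwise in $q$; dominated convergence against $\Psi$ and then against the summable weights $(1-\b)\b^k$ finishes the reduction. The weak limit is absolutely continuous by Lemma~\ref{lemma:omega}, so it indeed lies in $\Omega$.

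The core is therefore an induction on $k$ establishing $\Upsilon^{(k)}_{\r_n}(\cdot\,|q)\Rightarrow\Upsilon^{(k)}_{\r}(\cdot\,|q)$, where I abbreviate $\Upsilon^{(k)}_{\r_n}=\Upsilon^{(k)}_{\r_n,\hat\t_{\r_n}}$. The base case $k=0$ is the point mass $\delta_q$, independent of $\r$. For the inductive step I would introduce the conditional (no-regeneration) one-step operator
\[
(\mathcal{T}_\r g)(q)=p_\r(\hat\t_\r(q))\,\E_A g((q-1)^++A)+\bigl(1-p_\r(\hat\t_\r(q))\bigr)\E_A g(q+A),
\]
so that $\int g\,d\Upsilon^{(k+1)}_{\r}(\cdot\,|q)=\int(\mathcal{T}_\r g)\,d\Upsilon^{(k)}_{\r}(\cdot\,|q)$, and split the target difference as
\[
\int(\mathcal{T}_{\r_n}g-\mathcal{T}_\r g)\,d\Upsilon^{(k)}_{\r_n}+\left(\int\mathcal{T}_\r g\,d\Upsilon^{(k)}_{\r_n}-\int\mathcal{T}_\r g\,d\Upsilon^{(k)}_{\r}\right).
\]
Since $\mathcal{T}_\r g$ is bounded and continuous (using continuity of $\hat\t_\r$ from Lemma~\ref{lemma:rcontinuity} and dominated convergence for $\E_A g(\cdot+A)$), the second bracket vanishes by the inductive hypothesis. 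For the first term the crucial structural fact is that bounded arrivals, $A\le\bar{A}$, confine $\Upsilon^{(k)}_{\r_n}(\cdot\,|q)$ to the single compact set $[0,q+k\bar{A}]$ for every $n$.

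On that compact set the $w$-norm convergence $\hat\t_{\r_n}\to\hat\t_\r$ supplied by Step~1 upgrades to uniform convergence (because $w$ is bounded there), and combined with $\|\r_n-\r\|\to0$ and the uniform continuity of $\r$ this yields $p_{\r_n}(\hat\t_{\r_n}(\cdot))\to p_\r(\hat\t_\r(\cdot))$ uniformly on $[0,q+k\bar{A}]$ (raising to the power $M-1$ preserves uniform convergence). Because $\mathcal{T}_{\r_n}g-\mathcal{T}_\r g$ is exactly $\bigl[p_{\r_n}(\hat\t_{\r_n}(\cdot))-p_\r(\hat\t_\r(\cdot))\bigr]\bigl[\E_A g((\cdot-1)^++A)-\E_A g(\cdot+A)\bigr]$, i.e.\ a factor bounded by $2\|g\|_\infty$ times a quantity tending to $0$ uniformly on the common support, the first integral is bounded by its sup-norm and vanishes. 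This closes the induction.

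The main obstacle I anticipate is that both the integrand and the integrating measure move with $n$, so neither plain weak convergence of $\Upsilon^{(k)}_{\r_n}$ nor plain pointwise convergence of the win probability is by itself enough for the first term. The resolution is precisely the observation that bounded arrivals trap every $\Upsilon^{(k)}_{\r_n}(\cdot\,|q)$ in a fixed compact set, which is what converts the weak ($w$-norm) control of $\hat\t_\r$ from Step~1 into the uniform control needed to kill the moving-integrand term; carefully verifying this compact-support/uniform-convergence interplay is where the real work lies.
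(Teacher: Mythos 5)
Your proof is correct, and it shares the paper's skeleton --- both reduce the claim to the series representation of Lemma~\ref{lemma:queuelengthdist} and then induct on $k$ over the no-regeneration kernels $\Upsilon^{(k)}_{\rho}$ --- but the analytic execution is genuinely different. The paper works with the Portmanteau characterization: it fixes an open set $B$, applies Fatou's lemma to the series, and only needs the one-sided inequality $\liminf_n \Upsilon^{(k)}_{\rho_n}(B|q)\geq \Upsilon^{(k)}_{\rho}(B|q)$, leaving the induction itself unproved in the text. You instead test against bounded continuous $g$, use dominated convergence on the series (the weights $(1-\beta)\beta^k$ are summable and each summand is bounded by $\|g\|_\infty$, so Fatou is not needed), and prove the stronger statement of full weak convergence $\Upsilon^{(k)}_{\rho_n}(\cdot|q)\Rightarrow\Upsilon^{(k)}_{\rho}(\cdot|q)$ via the split $\int(\mathcal{T}_{\rho_n}g-\mathcal{T}_{\rho}g)\,d\Upsilon^{(k)}_{\rho_n}+\bigl(\int\mathcal{T}_{\rho}g\,d\Upsilon^{(k)}_{\rho_n}-\int\mathcal{T}_{\rho}g\,d\Upsilon^{(k)}_{\rho}\bigr)$. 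Your treatment of the first (moving-integrand) term is the real added content: the observation that bounded arrivals trap $\Upsilon^{(k)}_{\rho_n}(\cdot|q)$ in the fixed compact $[0,q+k\bar{A}]$, where $w$ is bounded, is exactly what upgrades the $w$-norm convergence $\hat{\theta}_{\rho_n}\to\hat{\theta}_{\rho}$ from Step~1 to uniform convergence, and the factorization $\mathcal{T}_{\rho_n}g-\mathcal{T}_{\rho}g=[p_{\rho_n}(\hat{\theta}_{\rho_n}(\cdot))-p_{\rho}(\hat{\theta}_{\rho}(\cdot))][\E_A g((\cdot-1)^++A)-\E_A g(\cdot+A)]$ then kills it in sup norm (your appeal to uniform continuity of $\rho$ is legitimate: a continuous, monotone, bounded c.d.f.\ is uniformly continuous on all of $\mathbb{R}^+$). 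In short, the paper's route buys brevity by exploiting one-sidedness (liminf plus Fatou), while yours buys a complete, checkable induction with an explicit accounting of where $w$-norm continuity of $\theta^*$ and Assumption~\ref{assum:arrival} (boundedness of $A$) enter --- details the paper's sketch omits entirely.
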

 \begin{proof}
By Portmanteau theorem \cite{billingsley2009convergence},  we only need to show that for any sequence $\r_n\rightarrow \r$ in $w$-norm and any open set $B$, $\liminf_{n\rightarrow \infty}\Pi_{\r_n}(B)\geq\Pi_\r(B)$.
 By Fatou's lemma,
 \begin{align}
 \liminf_{n\rightarrow \infty}\Pi_{\r_n}(B) &=\liminf_{n\rightarrow\infty}\sum_{k=0}^\infty(1-\b)\b^k\E_{\Psi_R}[\Upsilon_{\r_n}^{(k)}(B|Q)]\nonumber\\
  &\geq\sum_{k=0}^\infty(1-\b)\b^k\E_{\Psi_R}[\liminf_{n\rightarrow\infty}\Upsilon_{\r_n}^{(k)}(B|Q)]
 \end{align}
where $Q\sim \Psi_R$.   Let $\Upsilon_{\r}^{(k)}=\Upsilon_{\r,\hat{\t}_\r}^{(k)}.$  We prove in \Cref{lem:liminfupsilon} (see Appendix~\ref{sec:step2}) that $\liminf_{n\rightarrow\infty}\Upsilon_{\r_n}^{(k)}(B|q)\geq\Upsilon_{\r}^{(k)}(B|q)$ for every $q \in \bb{R}^{+},$ and the proof follows.
 \end{proof}

\vspace{0.1in}
\noindent\emph{Step 3: Continuity of the mapping $\mathcal{F}$}\\
Now, using the results from Step~$1$ and Step~$2$, we establish continuity of the mapping $\mathcal{F}$. First, we show that $\mathcal{F}(\r) \in \mathcal{P}$.
\begin{lemma}\label{lem:PincludesF}
For any $\r\in\mathcal{P}$, let $\gamma(x)  = (\mathcal{F}(\r))(x) = \Pi_{\r}(\hat{\t}_\r^{-1}([0,x])), x \in \bb{R}^{+}$. Then, $\gamma \in \mathcal{P}$.
\end{lemma}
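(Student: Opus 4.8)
The plan is to verify the two defining properties of $\mathcal{P}$ for $\gamma$: that it is a continuous cumulative distribution function, and that $\int_0^\infty(1-\gamma(x))\,dx < E$ with a bound that does not depend on $\r$. The first property is essentially structural, while the second is where the real work lies.

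First I would observe that $\gamma$ is exactly the law of the random variable $X=\hat{\t}_\r(Q)$ with $Q\sim\Pi_\r$, since $\gamma(x)=\Pi_\r(\hat{\t}_\r^{-1}([0,x]))=\pr(\hat{\t}_\r(Q)\le x)$. As a push-forward of a probability measure it is automatically nondecreasing, right-continuous, and tends to $1$, hence a c.d.f. For continuity (absence of atoms) I would use Lemma~\ref{lemma:rcontinuity}: $\hat{\t}_\r$ is continuous and strictly increasing with $\hat{\t}_\r(0)=0$, so $\hat{\t}_\r^{-1}([0,x])=[0,\hat{\t}_\r^{-1}(x)]$ is an interval and $\hat{\t}_\r^{-1}$ is continuous; combined with the absolute continuity of $\Pi_\r$ from Lemma~\ref{lemma:omega}, the map $x\mapsto\Pi_\r([0,\hat{\t}_\r^{-1}(x)])$ is continuous. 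Equivalently, $X$ is a strictly monotone (hence injective) image of an atomless $Q$, so $X$ has no atoms and $\gamma$ is continuous.

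For the mean condition I would write $\int_0^\infty(1-\gamma(x))\,dx=\E[X]=\int\hat{\t}_\r(q)\,d\Pi_\r(q)$ and bound this uniformly in $\r$ in two steps. (a) A pointwise bid bound: by Corollary~\ref{cor:optbid}, $\hat{\t}_\r(q)=\b\,\E_A[\hat{V}_\r(q+A)-\hat{V}_\r((q-1)^++A)]$, so it suffices to control the increment of $\hat{V}_\r$ across an interval of length at most one. I would do this by coupling: run two copies of the queue started from $q+A$ and $(q-1)^++A$ (which differ by at most $1$) under the policy that is optimal for the lower copy, driven by common arrivals, common regeneration times, and a common win indicator $\1\{U_t\le p_\r(X_t)\}$. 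Since identical bids induce identical payments and identical win/loss events, the payment terms cancel and the two queue processes remain within distance one of each other for all time, whence $\hat{V}_\r(q+A)-\hat{V}_\r((q-1)^++A)\le\E[\sum_t\b^t(C(Q^{(1)}_t)-C(Q^{(2)}_t))]$. Using convexity of $C$ together with $Q^{(1)}_t\le q+A+t\bar A$, each cost increment is $O((q+t\bar A)^{m-1})$, and summing the geometric series yields $\hat{\t}_\r(q)\le c_1 q^{m-1}+c_2$ with constants depending only on $C,\bar A,\b,m$, and crucially not on $\r$. (b) A moment bound on $\Pi_\r$: by Lemma~\ref{lemma:queuelengthdist}, $\E_{\Pi_\r}[Q^{m-1}]=\sum_{k\ge0}(1-\b)\b^k\E_\Psi\,\E[Q_k^{m-1}\mid\text{no regen},Q_0]$, and since $Q_k\le Q_0+k\bar A$ regardless of service, $\E_{\Pi_\r}[Q^{m-1}]\le\sum_k(1-\b)\b^k\E_\Psi[(Q_0+k\bar A)^{m-1}]$, which is finite (geometric $\times$ polynomial, using finiteness of the relevant moment of $\Psi$) and again independent of $\r$.

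Combining (a) and (b) gives $\int\hat{\t}_\r(q)\,d\Pi_\r(q)\le c_1\,\E_{\Pi_\r}[Q^{m-1}]+c_2<\bar E$ for a finite constant $\bar E$ that depends only on the fixed data $C,\Phi,\Psi,\b,m$; taking the constant $E$ defining $\mathcal{P}$ to be at least $\bar E$ then gives $\gamma\in\mathcal{P}$. I expect the main obstacle to be step (a): producing a bid bound that is uniform over all $\r\in\mathcal{P}$ and that improves on the raw $O(q^m)$ growth of $\hat{V}_\r$, so that the moment of $\Pi_\r$ one must control stays matched to the growth of $C$. The coupling is what delivers both at once — it forces the $\r$-dependent payment and service terms to cancel and leaves only a cost increment over a gap of length one, which convexity of $C$ and the bounded arrivals render polynomially small and $\r$-free.
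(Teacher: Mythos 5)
Your proof is correct, and its skeleton matches the paper's own: the continuity argument is exactly the paper's (strict monotonicity of $\hat{\t}_\r$ from Lemma~\ref{lemma:rcontinuity} makes $\hat{\t}_\r^{-1}(\{x\})$ empty or a singleton, and atomlessness of $\Pi_\r$ from Lemma~\ref{lemma:omega} then rules out jumps), and your step~(b) is precisely the paper's device of dominating the queue by the never-served process $\tilde{Q}_k$ to obtain moment bounds on $\Pi_\r$ that are free of $\r$. Where you genuinely add something is step~(a): the paper's sketch stops at ``bound the mean of the original process by a quantity independent of $\r$,'' which by itself does not bound the mean \emph{bid}; one still needs a $\r$-uniform link between $\hat{\t}_\r(q)$ and $q$. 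The crude route would be $\hat{V}_\r(q)\leq\tilde{V}(q)=\E\left[\sum_t\b^tC(\tilde{Q}_t)\right]$ (value of always bidding zero), whence $\hat{\t}_\r(q)\leq\b\E_A\tilde{V}(q+A)=O(q^m)$, which costs an $m$-th moment of the stationary distribution; your mimicking-policy coupling---same bids, hence identical win events and cancelling payments, with the two trajectories pinned within distance one forever---sharpens this to $O(q^{m-1})$, so the moment of $\Pi_\r$ you must control stays matched to $C$'s growth, as you note. One caveat applies to both routes, and you rightly flag it: Assumption~\ref{assum:regen} only gives $\Psi$ a bounded density, so finiteness of the relevant moment of $\Psi$ is an implicit additional hypothesis (the paper's sketch needs it just as much). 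Finally, your closing move of choosing the constant $E$ in the definition of $\mathcal{P}$ to exceed the uniform bound $\bar{E}$ is legitimate precisely because $\bar{E}$ depends only on $C,\Phi,\Psi,\b,m$ and not on $E$ or $\r$, which is what the fixed-point argument requires.
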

 \begin{proof}
From the definition of $\Pi_{\r}$, it is easy to see that $\gamma$ is a distribution function. Since $\hat{\t}_\r$ is continuous and strictly increasing function as shown in Lemma~\ref{lemma:rcontinuity}, $\hat{\t}_\r^{-1}(\{x\})$ is either empty or a singleton. Then, from Lemma~\ref{lemma:omega}, we get that $\Pi_{\r}(\hat{\t}_\r^{-1}(\{x\})) =0$. Together, we get that $\gamma(x)$ has no jumps at any $x$ and hence it is  continuous.

To complete the proof, we need to show that the expected bid under $\gamma(.)$ is finite.  In order to do this, we construct a new random process $\tilde{Q}_{k}$ that is identical to the original queue length dynamics ${Q}_{k}$, except that it never receives any service.   We show that this process stochastically dominates the original, and use this property to bound the mean of the original process by a finite quantity independent of $\r.$  Full details are presented in Appendix~\ref{sec:step3}.
\end{proof}

We now have the main theorem.
\begin{theorem}
 The mapping $\mathcal{F}:\mathcal{P}\mapsto\mathcal{P}$ given by $(\mathcal{F}(\r))(x)=\Pi_{\r}(\hat{\t}_\r^{-1}([0,x]))$ is continuous.
\end{theorem}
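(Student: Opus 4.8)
The plan is to realize $\mathcal{F}$ as the composition of the two maps whose continuity was just established, together with the inversion operation, and then to control the single term that couples them. Fix a sequence $\r_n\to\r$ in $w$-norm. By Step~1 (continuity of $\t^*$), $\hat{\t}_{\r_n}\to\hat{\t}_\r$ in $w$-norm; since $w(q)=\max\{C(q),1\}$ is bounded on compacts, this upgrades to uniform convergence of $\hat{\t}_{\r_n}$ to $\hat{\t}_\r$ on every compact interval. By Step~2 (continuity of $\Pi^*$), $\Pi_{\r_n}\Rightarrow\Pi_\r$. It suffices to prove pointwise convergence $\gamma_n(x):=(\mathcal{F}(\r_n))(x)\to(\mathcal{F}(\r))(x)=:\gamma(x)$ for every $x$, because the $\gamma_n,\gamma$ are continuous c.d.f.'s and pointwise convergence of such functions is exactly weak convergence in $\Omega$.

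First I would unpack the set $\hat{\t}_\r^{-1}([0,x])$. By Lemma~\ref{lemma:rcontinuity}, $\hat{\t}_\r$ is continuous and strictly increasing, and from Corollary~\ref{cor:optbid} one checks $\hat{\t}_\r(0)=0$ (its growth dominates $\Delta C$, so it is onto $\bb{R}^+$); hence $\hat{\t}_\r^{-1}([0,x])=[0,q^*]$ with $q^*=\hat{\t}_\r^{-1}(x)$ the unique queue length satisfying $\hat{\t}_\r(q^*)=x$, and similarly $\hat{\t}_{\r_n}^{-1}([0,x])=[0,q_n^*]$. A sandwiching argument gives $q_n^*\to q^*$: for small $\e$, strict monotonicity yields $\hat{\t}_\r(q^*-\e)<x<\hat{\t}_\r(q^*+\e)$, and uniform convergence on $[0,q^*+1]$ transfers these strict inequalities to $\hat{\t}_{\r_n}$ for all large $n$, forcing $q^*-\e<q_n^*<q^*+\e$.

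Then I would split
\[
|\gamma_n(x)-\gamma(x)|\le \Pi_{\r_n}\big((q^*\wedge q_n^*,\,q^*\vee q_n^*]\big)+\big|\Pi_{\r_n}([0,q^*])-\Pi_\r([0,q^*])\big|.
\]
The second term tends to $0$ by weak convergence of $\Pi_{\r_n}$, since $\Pi_\r$ is absolutely continuous (Lemma~\ref{lemma:omega}) so that $q^*$ is a continuity point of its c.d.f. The first term is where the two moving objects interact, and is the main obstacle: the interval shrinks as $q_n^*\to q^*$, but the measures $\Pi_{\r_n}$ themselves vary with $n$, so weak convergence alone does not suffice. The resolution is a uniform density bound: from the convolution representation in the proof of Lemma~\ref{lemma:omega}, $\Pi_{\r_n}$ has density $\b\,(\phi*\Pi'_n)+(1-\b)\psi$, and because $\|\phi\|<c_\phi$ and $\|\psi\|<c_\psi$ (Assumptions~\ref{assum:arrival}--\ref{assum:regen}), this density is bounded by $c:=\b c_\phi+(1-\b)c_\psi$ uniformly in $n$ and in $\r_n$. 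Hence the first term is at most $c\,|q_n^*-q^*|\to 0$. Combining the two estimates gives $\gamma_n(x)\to\gamma(x)$ for every $x$, i.e. $\mathcal{F}(\r_n)\Rightarrow\mathcal{F}(\r)$, which establishes continuity of $\mathcal{F}$.
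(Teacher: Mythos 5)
Your proof is correct, but it takes a genuinely different route from the paper's. The paper dispatches the coupled limit $\Pi_{\r_n}\bigl(\hat{\t}_{\r_n}^{-1}(\cdot)\bigr)$ in one stroke by invoking Theorem~5.5 of Billingsley \cite{billingsley2009convergence} --- the mapping theorem for weak convergence with \emph{varying} maps --- and then notes that pointwise convergence of the resulting continuous c.d.f.'s upgrades to uniform convergence. You instead prove the required special case of that theorem by hand: strict monotonicity of $\hat{\t}_\r$ (Lemma~\ref{lemma:rcontinuity}) reduces the inverse images to intervals $[0,q_n^*]$, a sandwich argument with uniform-on-compacts convergence (correctly deduced from $w$-norm convergence, since $w$ is bounded on compacts) gives $q_n^*\to q^*$, and the moving-interval error is killed by the uniform density bound $\b c_\phi+(1-\b)c_\psi$ read off from the convolution representation in Lemma~\ref{lemma:omega}. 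That bound is exactly the estimate the paper proves separately later ($\Pi_\r([a,b])\leq c\,(b-a)$) for the equicontinuity half of the Arzel\`{a}--Ascoli step; you have in effect moved it forward, making the continuity proof self-contained and quantitative (an explicit modulus $|\gamma_n(x)-\gamma(x)|\leq c\,|q_n^*-q^*|+o(1)$), at the price of leaning on the monotone structure of the bid map, which the mapping-theorem route does not need.

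Two small repairs. First, your parenthetical claim that $\hat{\t}_\r$ is onto $\bb{R}^+$ does not follow from Assumption~\ref{assum:cost}: a strictly convex increasing $C$ may have bounded unit increments (e.g.\ $C(q)=q+e^{-q}$), so $\Delta C$, and with it the bid function, can remain bounded. This does not break your argument: for $x\geq\sup_q\hat{\t}_\r(q)$ one has $\gamma(x)=1$, and $\gamma_n(x)\to 1$ by your own tools --- for any $K$, strict monotonicity gives $\hat{\t}_\r(K)<x$, uniform convergence on $[0,K]$ gives $\hat{\t}_{\r_n}(K)<x$ eventually, hence $\gamma_n(x)\geq\Pi_{\r_n}([0,K])$, whose $\liminf$ is at least $\Pi_\r([0,K))\to 1$ as $K\to\infty$ --- but the case must be stated. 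Second, the theorem asserts continuity into the normed space $\mathcal{P}$ with the uniform norm, so your pointwise conclusion should be explicitly upgraded via P\'olya's theorem (pointwise convergence of c.d.f.'s to a continuous proper limit is uniform); the paper glosses the same step as ``standard arguments,'' so you are no worse off, but the sentence equating your task with weak convergence in $\Omega$ stops one step short of what the statement requires.
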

 \begin{proof}
Let $\r_n \rightarrow \r$ in uniform norm. From previous steps, we have $\hat{\t}_{\r_n} \rightarrow \hat{\t}_{\r}$ in $w$-norm and $\Pi_{\r_n} \Rightarrow \Pi_{\r}$. Then, using Theorem 5.5 of Billingsley \cite{billingsley2009convergence}, one can show that the push-forwards also converge: $$\Pi_{\r_n}(\hat{\t}_{\r_n}^{-1}(\cdot)) \Rightarrow \Pi_{\r}(\hat{\t}_{\r}^{-1}(\cdot)).$$
 Then, $\mathcal{F}(\r_n)$  converges point-wise to $\mathcal{F}(\r)$ as it is continuous at every $x$, i.e., $(\mathcal{F}(\r_n))(x) \rightarrow (\mathcal{F}(\r))(x)$ for all $x \in \bb{R}^{+}$.

It is easy to  show that in the norm space $\mathcal{P}$, point-wise convergence implies convergence in uniform norm. This result is proved in Lemma~\ref {lem:point-uniform}  in Appendix~\ref{sec:step3}.  This completes the proof.
 \end{proof}

\subsection{$\mathcal{F}(\mathcal{P})$ contained in a compact subset of $\mathcal{P}$}
We show that the closure of the image of the mapping $\mathcal{F}$, denoted by $\overline{\mathcal{F}(\mathcal{P})}$, is compact in $\mathcal{P}$. As $\mathcal{P}$ is a normed space, sequential compactness of any subset of $\mathcal{P}$ implies that the subset is compact. Hence, we just need to show that $\overline{\mathcal{F}(\mathcal{P})}$ is sequentially compact. Sequential compactness of a set $\overline{\mathcal{F}(\mathcal{P})}$  means the following: if $\{\r_n\}\in \overline{\mathcal{F}(\mathcal{P})}$ is a sequence, then there exists a subsequence $\{\r_{n_j}\}$ and $\r \in \overline{\mathcal{F}(\mathcal{P})}$ such that $\r_{n_j}\rightarrow \rho$. 
We use Arzel\`{a}-Ascoli theorem and uniform tightness of the measures in $\mathcal{F}(\mathcal{P})$ to show the sequential compactness. The version that we will use is stated below:
\begin{theorem}[Arzel\`{a}-Ascoli Theorem]
  Let $X$ be a $\sigma$-compact metric space. Let $\mathcal{G}$ be a family of continuous real valued functions on $X$. Then the following two statements are equivalent:
  \begin{enumerate}
   \item For every sequence $\{g_n\}\subset \mathcal{G}$ there exists a subsequence $g_{n_j}$ which converges uniformly on every compact subset of $X$.
   \item The family $\mathcal{G}$ is equicontinuous on every compact subset of $X$ and for any $x\in X$, there is a constant $C_x$ such that $|g(x)|<C_x$ for all $g\in \mathcal{G}$.
  \end{enumerate}
\end{theorem}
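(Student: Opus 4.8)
The plan is to prove the two implications separately, with the substantive content lying in $(2)\Rightarrow(1)$ (the classical selection principle) while $(1)\Rightarrow(2)$ follows cleanly by contraposition. I would begin by extracting the structural facts supplied by the hypotheses. Since $X$ is $\sigma$-compact I write $X=\bigcup_{m\geq 1}K_m$ with each $K_m$ compact and, without loss of generality, $K_m\subseteq K_{m+1}$. Each compact metric space $K_m$ is separable, so choosing a countable dense subset $D_m\subseteq K_m$ and setting $D=\bigcup_m D_m$ yields a countable set that is dense in every $K_m$. I would also record once, for reuse, the standard fact that for a family of functions on a metric space, equicontinuity at each point of a compact set upgrades to uniform equicontinuity on that set.

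For $(2)\Rightarrow(1)$, take any sequence $\{g_n\}\subset\mathcal{G}$. Pointwise boundedness makes $\{g_n(x)\}$ bounded for each $x\in D$, so applying Bolzano--Weierstrass at each point of $D$ together with a Cantor diagonal extraction produces a single subsequence $\{g_{n_j}\}$ converging at every point of $D$. I then fix a compact set $K$ and show $\{g_{n_j}\}$ is uniformly Cauchy on $K$: given $\epsilon>0$, uniform equicontinuity on $K$ supplies a $\delta>0$ controlling the oscillation of every $g\in\mathcal{G}$ by $\epsilon$, a finite $\delta$-net drawn from the dense set reduces the uniform estimate to finitely many pointwise-convergent sequences, and the usual three-$\epsilon$ argument closes the Cauchy bound. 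Completeness of $\mathbb{R}$ then gives a uniform limit on $K$, which is statement~$(1)$.

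For $(1)\Rightarrow(2)$ I argue by contraposition, splitting the failure of $(2)$ into two cases. If pointwise boundedness fails at some $x$, I pick $g_n\in\mathcal{G}$ with $|g_n(x)|\to\infty$; no subsequence can even converge at $x$, so no subsequence converges uniformly on the compact set $\{x\}$, violating $(1)$. If instead equicontinuity fails on some compact $K$, there exist $\epsilon_0>0$, pairs $x_n,y_n\in K$ with $d(x_n,y_n)\to 0$, and $g_n\in\mathcal{G}$ with $|g_n(x_n)-g_n(y_n)|\geq\epsilon_0$; since a uniform limit on $K$ of continuous functions is continuous, hence uniformly continuous on $K$, any uniformly convergent subsequence of $\{g_n\}$ would force the oscillation across these shrinking pairs to vanish, contradicting the $\epsilon_0$-separation. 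Either way $(1)$ fails, completing the contrapositive.

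The main obstacle is the bookkeeping in the selection step of $(2)\Rightarrow(1)$: ensuring that the single diagonal subsequence serves \emph{every} compact set simultaneously, not merely the exhausting sets $K_m$. The clean way to handle this is to observe that $D$ meets each $K_m$ densely and that, for the spaces of interest here (in particular $X=\mathbb{R}^+$, which is locally compact and $\sigma$-compact), every compact subset lies inside some $K_m$; the per-$K$ equicontinuity-plus-density argument above then applies verbatim on that $K_m$. Establishing uniform equicontinuity from pointwise equicontinuity on compacta, and upgrading pointwise-on-$D$ convergence to uniform-on-$K$ convergence, are the only places demanding care, and both become routine once the diagonal subsequence is in hand.
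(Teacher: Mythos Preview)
The paper does not supply its own proof of this statement: the Arzel\`{a}--Ascoli theorem is merely quoted as a classical tool and then applied to the family $\mathcal{F}(\mathcal{P})$, so there is no in-paper argument to compare your proposal against. Your write-up is the standard textbook proof and is essentially sound; the diagonal extraction on a countable dense set followed by the three-$\epsilon$ upgrade via uniform equicontinuity is exactly the right mechanism for $(2)\Rightarrow(1)$, and your contrapositive for $(1)\Rightarrow(2)$ is correct.

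One small point worth tightening: in full generality a $\sigma$-compact metric space need not admit an exhaustion $K_m$ with the property that every compact set sits inside some $K_m$ (this requires local compactness so that one can arrange $K_m\subseteq\mathrm{int}\,K_{m+1}$). You already flag this and retreat to the locally compact case $X=\mathbb{R}^+$, which is all the paper needs. If you want the argument to go through verbatim for arbitrary $\sigma$-compact metric $X$, the clean fix is to note that such an $X$ is separable, take $D$ countable dense in $X$, and when running the three-$\epsilon$ estimate on a given compact $K$ use equicontinuity on the enlarged compact set $K\cup\{d_1,\ldots,d_N\}$, where the $d_i\in D$ form a finite $\delta$-net for $K$; this avoids needing $D\cap K$ itself to be dense in $K$.
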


Suppose a family of functions $\mathcal{D}\subseteq\mathcal{P}$ satisfies the equivalent conditions of the Arzel\'{a}-Ascoli theorem and in addition satisfy the uniform tightness property, i.e., $\forall \e>0$ there exists and $x_\e$ such that for all $f\in\mathcal{D}$ $1\geq f(x_\epsilon)\geq 1-\e$. Then, for any sequence $\{\r_n\}\subset\mathcal{D}$, there exists a subsequnce $\{\r_{n_j}\}$ that converges uniformly on every compact set to a continuous increasing function $\r$ on $\bb{R}^+$. As $\mathcal{D}$ is uniformly tight it can be shown that $\r_{n_j}$ converges uniformly to $\r$ and that $\r\in\mathcal{P}$. Therefore, $\overline{\mathcal{D}}$ is sequentially compact in the topology of uniform norm.

\ignore{Say the family of functions $\mathcal{F}(\mathcal{P})$ satisfy the equivalent conditions of Arzel\`{a}-Ascoli theorem. Also, let they satisfy the uniform tightness property, i.e, $ \forall f \in \mathcal{F}(\mathcal{P})$, there exists an $x_{\epsilon}$ such that $1\geq f(x_{\epsilon}) >1-\epsilon$. Then, for any sequence $\{\r_n\}\subset \mathcal{F}(\mathcal{P})$, there exists a subsequence $\{\r_{n_j}\}$ that converges uniformly on every compact sets to a continuous increasing function $\r$. As these functions are uniformly tight, uniform convergence on compact sets imply uniform convergence. i.e. $\r_{n_j} \rightarrow \r$. Therefore, $\overline{\mathcal{F}(\mathcal{P})}$ is  sequentially compact in the topology of uniform norm. 

Finally, we need to show that $\overline{\mathcal{F}(\mathcal{P})} \subset \mathcal{P}$. From the tightness property, the limit function $\r$ satisfies that $1\geq \r(x_{\epsilon})\geq (1-\epsilon)$ and therefore $r(\infty)=1$. Also, we have
\begin{align}
\int (1 - \r(x)) dx \leq \liminf_{n_{j} \rightarrow \infty} \int (1 -\r_{n_{j}}(x)) dx
< \infty.
\end{align}
The first inequality is due to Fatou's lemma. And, the second inequality holds since $\{\r_{n_j}\} \in \mathcal{P}$. Therefore $\r \in \mathcal{F}$ and hence $\overline{\mathcal{F}(\mathcal{P})} \subset \mathcal{P}$.
}

In the following, we show that $\mathcal{F}(\mathcal{P})$ satisfies uniform tightness property and condition $2$ in Arzel\'{a}-Ascoli theorem. First verifying the conditions of  Arzel\'{a}-Ascoli theorem, note that the functions in consideration are uniformly bounded by $1$. To prove equicontinuity, consider a $\gamma = \mathcal{F}(\rho)$ and let $x>y$.
\begin{align}\label{eq:temp1}
 \gamma(x)-\gamma(y)&=\Pi_\r(\t_\r(q)\leq x)-\Pi_\r(\t_\r(q)\leq y)\nonumber\\
&=\Pi_\r(y<\t_\r(q)\leq x)
\end{align}

\begin{lemma}\label{lem:bound-on-Pi}
For any interval $[a,b]$, $\Pi_\r([a,b])<c\cdot(b-a)$, for some large enough $c$.
\end{lemma}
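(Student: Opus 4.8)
The plan is to show that $\Pi_\r$ admits a probability density that is bounded above by a constant that does not depend on $\r$ (nor on the interval), after which the claim follows immediately by integrating that density over $[a,b]$. The uniformity of this bound in $\r$ is the feature that matters, since this lemma is being invoked to establish equicontinuity of the family $\mathcal{F}(\mathcal{P})$.

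First I would recall the convolution representation derived in the proof of Lemma~\ref{lemma:omega}: for any Borel set $B$,
\begin{align*}
\Pi_{\r}(B) = \b\int_{-\infty}^\infty \Phi(B-y)\,d\Pi'(y) + (1-\b)\Psi(B),
\end{align*}
where $\Pi'$ is the post-service, pre-arrival distribution of the queue length. Note that $\Pi'$ is an honest probability measure but may carry atoms (e.g.\ at $0$, coming from the $(q-1)^+$ map). Differentiating with respect to Lebesgue measure, the density of $\Pi_\r$ at a point $z$ is
\begin{align*}
\pi_\r(z) = \b\int_{-\infty}^\infty \phi(z-y)\,d\Pi'(y) + (1-\b)\psi(z),
\end{align*}
where $\phi$ and $\psi$ are the densities of the arrival and regeneration laws $\Phi$ and $\Psi$.

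The key observation is that convolving an arbitrary probability measure with a bounded density produces a bounded density, so the atoms of $\Pi'$ are harmless; one should not try to bound a density of $\Pi'$ itself, since none need exist. Using Assumptions~\ref{assum:arrival} and \ref{assum:regen}, i.e.\ $\|\phi\| < c_\phi$ and $\|\psi\| < c_\psi$, and the fact that $\Pi'$ is a probability measure,
\begin{align*}
\int_{-\infty}^\infty \phi(z-y)\,d\Pi'(y) \le \|\phi\|\int d\Pi'(y) = c_\phi,
\end{align*}
whence $\pi_\r(z) \le \b c_\phi + (1-\b)c_\psi =: c$ for every $z$. Since $c$ depends only on the fixed constants $c_\phi$, $c_\psi$, and $\b$, it is independent of $\r$, and integrating gives $\Pi_\r([a,b]) = \int_a^b \pi_\r(z)\,dz \le c\,(b-a)$.

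I do not anticipate a genuine obstacle: the argument is essentially the observation that the arrival convolution regularizes $\Pi'$ into a bounded density. The only points requiring care are to route the bound through $\phi$ rather than through $\Pi'$, and to keep track that every quantity entering $c$ is a fixed model constant so that the bound is uniform over $\r \in \mathcal{P}$.
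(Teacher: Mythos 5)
Your proof is correct, and it secures the one feature the lemma is actually used for: the constant $c=\b c_\phi+(1-\b)c_\psi$ depends only on the fixed model constants, hence is uniform over $\r\in\mathcal{P}$, which is what the equicontinuity argument for $\mathcal{F}(\mathcal{P})$ requires. You do, however, take a different route from the paper. The paper's (sketched) proof derives the bound from the regeneration-series characterization of Lemma~\ref{lemma:queuelengthdist}, $\Pi_{\r}(B)=\sum_{k\geq 0}(1-\b)\b^k\E_{\Psi}\bigl(\Upsilon_{\r}^{(k)}(B|Q)\bigr)$, bounding each term: the $k=0$ conditional law is a Dirac mass at $q$, but averaging over $Q\sim\Psi$ yields a measure with density at most $c_\psi$, while for $k\geq 1$ the last transition before time $k$ is a convolution with $\Phi$, so $\E_{\Psi}\bigl(\Upsilon_{\r}^{(k)}(B|Q)\bigr)\leq c_\phi\,\mathrm{Leb}(B)$; summing the geometric series produces exactly the same constant $\b c_\phi+(1-\b)c_\psi$. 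You instead use the one-step stationarity identity (the convolution representation from the proof of Lemma~\ref{lemma:omega}), which short-circuits the series: a single application of Fubini together with $\|\phi\|<c_\phi$ from Assumption~\ref{assum:arrival} suffices, and you correctly flag the key subtlety that $\Pi'$ may carry atoms (e.g.\ at $0$ from the $(q-1)^+$ map), so the bound must be routed through $\phi$ rather than through any density of $\Pi'$ --- the arrival convolution regularizes regardless. Your argument is the more economical of the two and buys nothing less; the paper's series characterization is presumably preferred because it is developed anyway and reused for the continuity of $\Pi^*$. One cosmetic point: your computation gives $\Pi_\r([a,b])\leq c(b-a)$, whereas the lemma asserts strict inequality; since Assumptions~\ref{assum:arrival} and \ref{assum:regen} give the strict bounds $\|\phi\|<c_\phi$ and $\|\psi\|<c_\psi$, the strict version follows as stated (or simply take any $c'>c$), so this is not a gap.
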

\begin{proof}
The proof follows easily from our characterization of $\Pi_\r$ in terms of $\Upsilon^{(k)}_\r.$
\end{proof}

The above lemma and equation (\ref{eq:temp1}) imply that $\gamma(x)-\gamma(y)\leq c(\t_\r^{-1}(x)-\t_\r^{-1}(y))$. To show equicontinuity, it is enough to show that $\limsup_{y\uparrow x} \frac{\gamma(x)-\gamma(y)}{x-y}\leq K(x)$ for some $K$ independent of $\r$, which we will show now.
\begin{align*}
  \limsup_{y \uparrow x} \frac{\gamma(x)-\gamma(y)}{x-y} &= \limsup_{y \uparrow x} \frac{\Pi_{\rho} (y<\theta_{\rho}(q)\leq x) }{x-y} \\
  &= \limsup_{y \uparrow x} \frac{\Pi_{\rho} \left(\big[\theta_{\rho}^{-1}(y), \theta_{\rho}^{-1}(x)\big]\right) }{x-y} \\
  &\leq c \limsup_{y \uparrow x} \frac{\theta_{\rho}^{-1}(x) - \theta_{\rho}^{-1}(y)}{x-y}\\
  &= c \limsup_{y \uparrow x} \frac{\theta_{\rho}^{-1}(x) - \theta_{\rho}^{-1}(y)}{\theta_{\rho}\theta_{\rho}^{-1}(x) - \theta_{\rho}\theta_{\rho}^{-1}(y)}\\
\end{align*}
Let $x' = \theta_{\rho}^{-1}(x)$ and $y' = \theta_{\rho}^{-1}(y)$. Now,
\begin{align*}
  \limsup_{y \uparrow x} \frac{\gamma(x)-\gamma(y)}{x-y} &\leq  c \limsup_{y' \rightarrow  x'} \frac{x' - y'}{\theta_{\rho}(x') - \theta_{\rho}(y')}\\
  &= c \limsup_{y' \rightarrow  x'} \frac{x' - y'}{\beta \Delta V(x') - \beta \Delta V(y')}\\
  &\leq  c \limsup_{y' \rightarrow  x'} \frac{x' - y'}{\beta \left(\Delta C(x') - \Delta C(y')\right)}\\
  &\leq c \frac{1}{H(x')}
\end{align*}
Where,
\begin{align*}
 0<H(x')=\begin{cases}
       \E_A[C'(x'+A)-C'(\overline{x-1}+A)]& x'>1\\
       \E_A[ C'(x'+A)] & x'\leq 1
      \end{cases}
\end{align*}
and $C'(x) = \frac{dC(x)}{dx}$.

Finally, we have the following lemma showing that $\F(\mathcal{P})$ is uniformly tight.
\begin{lemma}\label{lem:tight}
$\F(\mathcal{P})$ is uniformly tight, i.e., for any $\epsilon >0$ and any $ f \in \F(\mathcal{P})$, there exists an $x_{\epsilon}\in \mathbb{R}$ such that $1-\epsilon \leq f(x_{\epsilon}) \leq 1$.
\end{lemma}
\begin{proof}
From Lemma~\ref{lem:PincludesF}, we have $\F(\mathcal{P}) \subseteq \mathcal{P}$. Hence, the expectation of the bid distributions in $\F(\mathcal{P})$ is bounded uniformly. An application of Markov inequality will give uniform tightness.
\end{proof}

\section{Approximation Results: PBE and MFE} \label{sec:approx}
In this section we prove that the mean field policy is an $\e$-Nash equilibrium.  We have the following theorem:
\begin{theorem}\label{thm:approxthm}
Let $(\r,\hat{\t}_\r)$ constitute an MFE. Suppose at time $0$ the queue length of the users is set independently across users according to $\Pi_\r$; and that their initial belief is also consistent.  Also, suppose that all queues except queue $1$ play the MFE policy $\hat{\t}_\r$. Then, for any policy $\t^N$ of queue $1$ that may be history dependent and any $q\in\mathbb{R}^+$, we have
\begin{align}
\limsup_{N\rightarrow\infty} V_{1,\mu_{1,0}}^{N}(q;\hat{\t}_\r,(\hat{\t}_\r)_{-1})-V_{1,\mu_{1,0}}^{N}(q;\t^N,(\hat{\t}_\r)_{-1})\leq 0,\nonumber
\end{align}
where $\mu_{1,0}=\Pi_\r$ and the superscript $N$ has been added to explicitly indicate the dependence on the number of cells.
\end{theorem}

The main idea behind the proof is a result called \emph{propagation of chaos,} and it identifies conditions under which any finite subset of the state variables are independent from each other.  We state this result now in our context.   We only provide brief sketches of proofs in this section, since the methodology is much the same as \cite{graham1994chaos} and space constraints do not allow us to present the full version of the proofs here.

\begin{lemma}[Propagation of chaos]\label{lem:approxlemma}
For any fixed indices $i_1,\ldots,i_k$, let $\law(Q_{i_1}^{N}(t),\ldots,Q_{i_k}^{N}(t)$ denote the probability law of the $k$-tuple of corresponding queues in the $MN$-queue system, at time $t$. Suppose that $\law((Q_{i_1}^{N}(0),\ldots,Q_{i_k,0}^{N}(0))=\otimes^k\Pi_\r$, where $(\r,\t_\r)$ is the solution to the MFE equation. Also, suppose that all queues are following mean field equilibrium strategy. Then for any $T>0$, we have
\begin{align*}
\law(Q_{i_1}^{N}(T),\ldots,Q_{i_k}^{N}(T)\Rightarrow \otimes^k\Pi_\r,
\end{align*}
as $N\rightarrow\infty$.
\end{lemma}
\begin{proof}
We shall only consider the case $k=2$; the proof of the general case is similar. We can follow the proof of Theorems 4.1 and 5.1 in Graham and Meleard \cite{graham1994chaos}. The proof is divided into two parts; the first part proves that for any two agents $i$ and $j$,
\begin{align*}
\|\law(Q_{i}^{N},Q_{j}^{N})-\law(Q_{i}^{N})\otimes\law(Q_{j}^{N})\|_D\rightarrow 0,
\end{align*}
where the subscript $D$ refers to the total variation norm.  In the second, we show that $\law(Q_{i}^{N})\Rightarrow\Pi_\r$.
Both parts rely on studying interaction graphs, defined in \cite{graham1994chaos}, which characterize the amount of interactions that any finite subset of agents may have had in the past.  
\end{proof}

The proof of \Cref{thm:approxthm} is as follows. Suppose we start at time $t=0$ with queue length of agent $1$ being $Q_1(0)$.  We can choose a time $T$ large enough so that the value added by auctions occurring after time $T$ is less than $\e$, due to discounting. Thus, the difference in value contributed by these auctions, when using policy $\t^{N}$ and $\hat{\t}_\r$ can be bounded by $2\e,$ and we can restrict attention  to the first $T$ auctions.

Using ideas similar to \Cref{lem:approxlemma}, we show that probability of the event $E^N$ that other agents that interact with agent $1$ at time $t$ have never been influenced by agent $1$  goes to $1$ as $N$ becomes large.   Thus, the belief of distribution of queue lengths of other agents encountered converges to $\Pi_\r$ according to \Cref{lem:approxlemma}. Then  we can show that for any $\e>0$ and $N$, 
\begin{align*}
V^{N}_{1,\mu_{1,0}}(q;\t_\r,(\t_\r)_{-1})-V^{N}_{1,\mu_{1,0}}(q;\t^N,(\t_\r)_{-1})\leq \e,
\end{align*}
which yields the desired result.  

\section{Simulation Results}\label{sec:sims}
\begin{figure*}[ht]
\centering
\begin{minipage}{.3\textwidth}
\centering
\includegraphics[width=1\columnwidth]{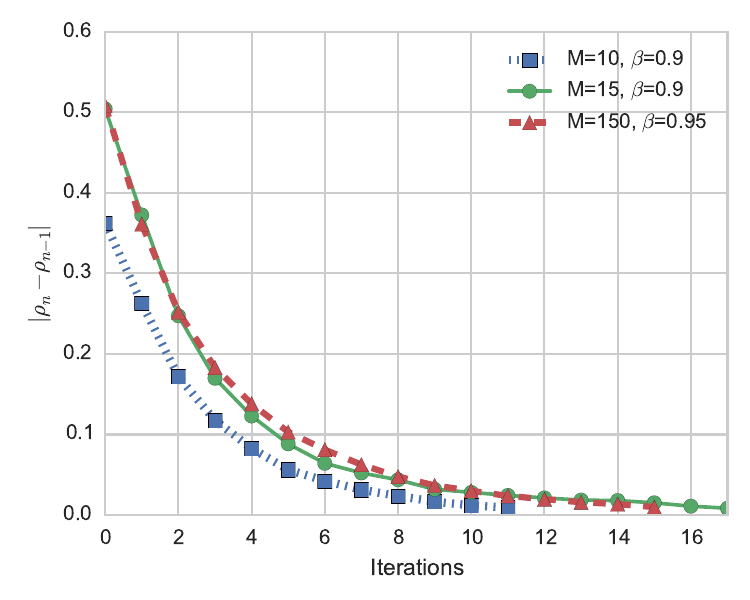}
\caption{Convergence to MFE bid distribution}
\label{fig:conv}
\end{minipage}\hfill
\begin{minipage}{.3\textwidth}
\centering
\includegraphics[width=1\columnwidth]{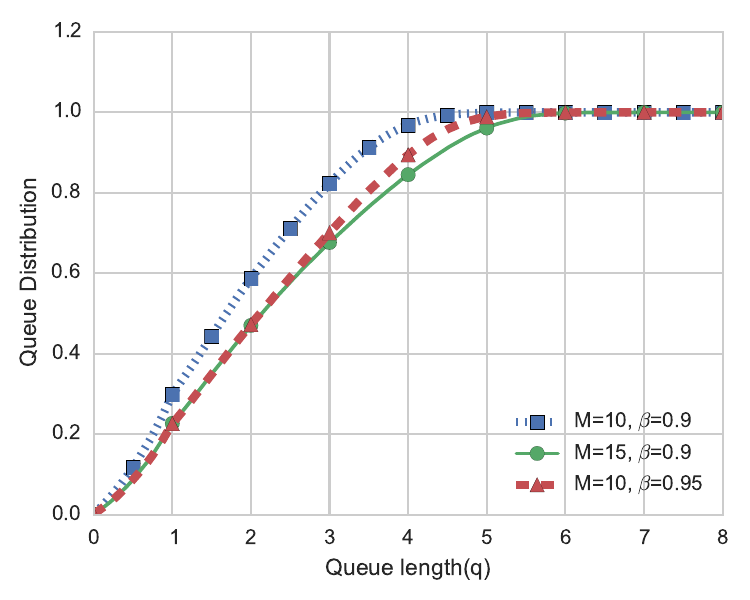}
\caption{MFE queue length distribution}
\label{fig:pi}
\end{minipage}\hfill
\begin{minipage}{.3\textwidth}
\centering
\includegraphics[width=\columnwidth]{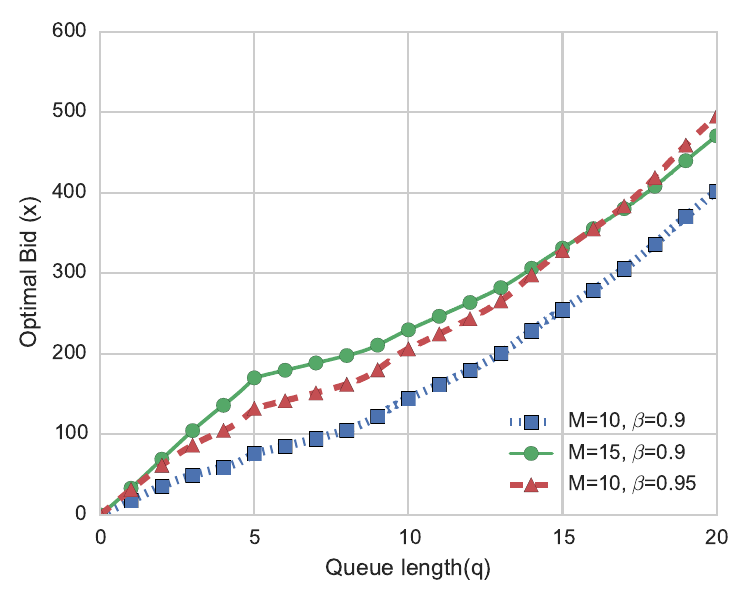}
\caption{MFE optimal bid function}
\label{fig:bids}
\end{minipage}
\end{figure*}
We now turn to computing the MFE distribution.  We simulate a large system with 100,000 users distributed among 10,000 cells with 10 users per cell. For simplicity of simulation, we truncate and discretize both state and bid spaces. The truncated state space is $ \mathcal{S} = \{0.01m, 0\leq m\leq 2000\}$, while the bid space is $\mathcal{X} = \{0.15m, 0\leq m\leq 3000\}$.  The job arrival and regeneration distributions are both chosen to be uniform over interval $[0,1]$.   The service rate of each base station is assumed to be $5$ units per time slot.  Finally, the holding cost function is chosen as $C(q) = q^2$.

Our simulation simply follows the choices made by each agent and calculating the empirical distribution that would result at each time step.  Let $\rho_0(x) = \min\{0.001 x,1\}, x \  in \mathcal{X}$ and $\Pi_0 = \Psi$. For every positive integer $n$, do the following:
\begin{enumerate}
\item Compute the optimal value function, $\hat{V}_n$, which is the unique solution to the following equation,
$$\hat{V}_n(q) = C(q) + \beta \E_A [ \hat{V}_n(q+A)] - \sum_{x \leq\b\Delta \hat{V}_n(q), x\in \mathcal{X}} p_{\rho_n}(x),$$
where $\Delta \hat{V}_n(q) = \E_A[ \hat{V}_n (q+A )] - \E_A[\hat{V}_n((q-1)^++A)]$. We apply value iteration (Section 6.10 \cite{puterman1994markov})
to compute an approximate solution to the above equation.
\item Compute the optimal bid function, $\hat{\theta}_n$ as $$\hat{\theta}_n(q) = \beta \E_A[ \hat{V}_n (q+A ) - \hat{V}_n((q-1)^++A)].$$
\item Next, all agents employ the optimal bid policy from Step~2 and update their states. Then we compute the candidate steady state distribution by evaluating emperical distribution of state.
\item Finally, compute the empirical bid distribution, $\rho_{n+1}(x) = \Pi_n[\hat{\theta}^{-1}_n([0,x])].$  If $||\rho_n - \rho_{n+1}|| < \epsilon$, then  $\rho = \rho_{n+1}$ and exit. Otherwise, set $n= n+1$ and go to Step~$1$.
\end{enumerate}
If the algorithm converges, then its output distribution $\rho,$ is an approximation of the MFE bid distribution.

We simulated the algorithm for three set of parameters: $1.~(\beta = 0.9, M = 10)$,  $2.~(\beta = 0.95, M = 10)$ and $3.~(\beta = 0.9, M = 15)$.  Also, we chose the accuracy parameter $\epsilon =0.008$. Figure~\ref{fig:conv} shows that the algorithm converges in less than $20$ iterations in all three cases. In each iteration, Step~$1$ (value iteration) is the most computationally intensive.   It converges in $80$ recursions, with each recursion having to update $|\mathcal{S}|$ number of variables, and with each variable update requiring at most $|\mathcal{X}|$ number of arithmetic operations. All together, the computational complexity of each iteration is in the order of $80 \times |\mathcal{S}| \times |\mathcal{X}|$ arithmetic operations.

The queue length distributions at MFE are shown in Figure~\ref{fig:pi}.  We observe that the distribution curves exhibit a rightward shift with increase in $\beta$ or $M$.  Note that larger $\beta$ makes queues live longer without regeneration, while higher $M$ reduces each individual's average service rate. Hence, the queues get longer on average.  We show the optimal bid functions at MFE in Figure~\ref{fig:bids}. As expected from our analysis, the bid functions are monotonically increasing in queue length.

\section{Discussion and Concluding Remarks}\label{sec:concl}
Our algorithm for computing the MFE immediately suggests a simple implementation scheme.  Suppose each mobile device has a network interface manager on which the human user sets up priorities for different apps.  The interface manager also should be aware of the cost functions corresponding to the QoE of different apps.  Now, suppose that the base stations were to calculate the empirical bid distribution at each time instant, and return it to the interface manager.  The interface manager plays its best response to this bid distribution.  Value iteration could be done either on each device or at a data center and provided as a look up table to the interface manager.  The base stations combine all the bids to create a new empirical bid distribution.  Such a proceeding is essentially identical to the algorithm that we employed above, and would converge in a similar fashion.

We had assumed a single cost function for the agent (app), but as long as there are a finite set of cost functions (corresponding to a finite number of app types) we can incorporate the cost function as part of the state of the agent, with the cost function being chosen according to some distribution at each regeneration (corresponding to choosing to start a new app with some probability).  Such a modification causes no changes to any of our analysis.  Then the mean field bid distribution accounts for the distribution of cost functions (app popularities), and the agent takes a decision based on this distribution as before.  The only difference to the achieved equilibrium is that it now follows a weighted version of LQF, with the weights corresponding to the cost on QoE of the competing apps.

To summarize our work, we explored the question of whether it is possible to design an scheduling policy that allows for declaration of value by humans in the loop and effects on QoE, and which has the attractive properties of a (weighted) LQF service regime.   We used a mean field framework to show that as the number of agents in the system becomes large, this objective can indeed be fulfilled using a second price auction at each server.   Our design appears to lend itself well to implementation and this will be our future goal.



\bibliographystyle{IEEEtran}
\bibliography{references}

\appendices
\section{Proof of \Cref{lem:optimalityMDP}}\label{sec:optimalityMDP}
\ignore{
We first derive the formula in \cref{eq:opt-bid}.

\begin{lemma}\label{lem:optimalbid}
 Define $g(b,v)=r_\r(b)-p_\r(b)v$. Then ${v}\in\arg\min_{b\in\bb{R}^+}g(b,v)$.
 \end{lemma}
 \begin{proof}
  If $v\leq 0$, then $-p_\r(b)v\geq 0$ and $g(b,v)$ is increasing in $b$. Hence the minimum occurs at $b=0$. If $v>0$, then consider
  \begin{align*}
   g(b,v)-g(v,v)=&bp_\r(b)-\int_0^bp_\r(u)du-vp(b)+\int_0^vp_\r(u)du\\
   =&(b-v)p_\r(b)+\int_b^vp_\r(u)du\\
   =&\int_b^vp_\r(u)-p_\r(b)du\\
   \geq& 0
  \end{align*}
  with equality at $b=v$. Hence we have the desired result
 \end{proof}
}

We may rewrite the the definition of $T_{\r}$ in (\ref{eq:T}) as
\begin{align}\label{eq:Tn}
(T_{\r} f)(q) = \inf_{x \in \mathbb{R}^{+}} S_{f}(q,x)
\end{align}
where $S_{f}(q,x) = C(q) + r_{\r}(x) + \beta \E_{Q_1} [ f(Q_1)| q, x]$. Given the current state and action pair, $(q,x)$, the first two terms in  $S_{f}(q,x)$ constitute the current cost, while the last term is the future expected cost, where $Q_1$ is one-step future state variable. Further, from (\ref{eq:T}), we have
\begin{align}\label{eq:fcost}
  \E_{Q_1} [ f(Q_1)| q, x] & = (1- p_{\r}(x) \E_{A}[ f(q+A)] \nonumber\\
& + p_{\r}(x) \E_{A}[ f( (q-1)^{+} +A]. \nonumber
\end{align}
The proof proceeds through a verification of the assumptions of Theorem~$8.3.6$ in \cite{hernandez1999}. An exception is that action space in our case is not a compact set which violates Assumption~$8.3.1(a)$ in \cite{hernandez1999}. However, this assumption can be overridden if the statement of Lemma~$8.3.8 (a)$ in \cite{hernandez1999}, equivalently Condition $(3)$ below, holds true. Further, we desire to show the existence of a $j \in \mathbb{N}$ such that $T_{\r}^{j}$ is a contraction mapping.  Since Theorem~$8.3.6$ is derived for $j=1,$ we replace Assumption~$8.3.2(b)$ with Condition $(5)$ given below.

Now, we prove the following statements.
\begin{enumerate}
\item $C(q)+r_{\r}(x)$ is a continuous function in $x \in \mathbb{R}^{+}.$
\item  $\E_{Q_1} [ f(Q_1)| q, x]$ is continuous in $x \in \mathbb{R}^{+}$ for every $f \in \mathcal{V}$.
\item For any $f \in \mathcal{V}$, there exists a measurable function $\theta_{f}: \mathbb{R}^{+} \rightarrow \mathbb{R}^{+}$ such that $\theta_{f}(q)$ attains minimum in (\ref{eq:Tn}). Further, $S_{f}(q, \theta_{f}(q))$ is a measurable function for any $f \in \mathcal{V}.$
\item There exists a nonnegative constant $c_1$ such that $\sup_{x} |C(q)+r_{\r}(x)| \leq c_1 w(q)$ where $w(q) = \max\{C(q),1\}$.
\item There exists  $ j \in \mathbb{N}$ and  $c_2$ with $ c_2 < 1$ such that  $\beta^{j}\sup_{\vec{x}} \E_{Q_j}[ w(Q_j)|q,\vec{x}] \leq c_2 w(q) + c_3$ where $Q_j$ is $j$-step future state variable and $\vec{x}$ is a $j$-length sequence of actions.
\item The function $\E[ w(Q_1)|q,x]$ is continuous in $x \in \mathbb{R}^{+}$
\end{enumerate}

Conditions $(1)$ and $(2)$ are obvious from the continuity of $r_{\r}(x)$ and $p_{\r}(x)$. Further, as derived in (\ref{eq:opt-bid}), $\theta_{f}(q) = \Delta f(q)^+$ where $\Delta f(q) = \E_A (f(q+A)-f((q-1)^++A)).$ The measurability of functions $\theta_{f}(q)$ and $S_{f}(q, \theta(q))$ are evident from their definitions. Condition $(4)$ holds true from the definition of $w(q)$ and from the fact that $$r_\r(x)\leq \lim_{y\rightarrow\infty}r_\r(y) < (M-1) \int_0^\infty (1 -\r(x)) dx < (M-1)E$$ where the last inequality follows as $\r \in \mathcal{P}$. Condition~$(5)$ follows from the fact that,
\begin{align}
 \b^j\E_{Q_j}[w(Q_j)|q,\vec{x}]& \leq \b^j \max\{1, C(q+j\bar{A})\} \nonumber\\
 &\leq \b^j ( k_1 w(q) + k_2),\nonumber
\end{align}
where $\bar{A}$, as defined in Assumption~\ref{assum:arrival}, is the maximum arrival possible between any two adjacent auctions and $k_1>0, k_2$ are some constants independent of $j$. The above results follows from (\ref{eq:fcost}) and the definition of $w(q)$. Then, there exists a $j$ such that  $\b^jk_1 = c_2 <1$ and hence $(5)$ holds. Finally, the last condition follows from Condition $(2)$ as $w(q) \in \mathcal{V}$.

Given that the above conditions are met, we can prove the first statement of Lemma~\ref{lem:optimalityMDP}. The proof is essentially identical to that of Theorem~$8.3.6$ in \cite{hernandez1999}. The second statement of the lemma can be obtained by comparing (\ref{eq:T}) and (\ref{eq:bellman}). The last part of the lemma follows from (\ref{eq:opt-bid}).

\ignore{
To prove \eqref{eq:optcond0}, one may observe from \eqref{eq:Tnew} that
\begin{align}
C(q)  \leq (T_{\r}f)(q) \leq C(q)+ \beta \E_A f(q+A).
\end{align}
Here, the left most expression is positive. Also, the rightmost expression is bounded by some multiple of $w(q)$ since $A$ is a bounded random variable by Assumption~\ref{assum:arrival}. Together, we obtain  \eqref{eq:optcond0}. Here, the last inequality is due to $\r \in \mathcal{P}$.
 \Cref{eq:optcond2} holds true since

\begin{align}
\E_{Q_1}[f(q_1)|Q_0=q]&\leq\|f\|_w\E_{Q_1}[w(Q_1)|Q_0=q] \nonumber\\
&= \|f\|_w\left[p(b)\E_Aw((q-1)^++A)\right.\\
&\quad+\left.(1-p(b))\E_Aw(q+A)\right] \nonumber\\
&\leq \|f\|_w\left[\E_Aw(q+A)\right] \nonumber\\
&\leq \|f\|_wK_2w(q).\nonumber
\end{align}
}


\section{Proofs from \Cref{sec:mfe-proof}}\label{sec:prfconttheta}

In this section, we present details of proofs that were omitted from \Cref{sec:mfe-proof}.  We divide this section into parts based on the development of that section.

\subsection{Proofs Pertaining to  \Cref{sec:mfe-proof}-A: Step 1}\label{sec:step1}

\begin{lemma}\label{lem:theta-bound}
Suppose $\r_1,\r_2\in\mathcal{P}$. Then, $\|\hat\t_{\r_1}-\hat\t_{\r_2}\|_w\leq K\|\hat{V}_{\r_1}-\hat{V}_{\r_2}\|_w$
\end{lemma}
\begin{proof}
For any $q\in\bb{R}^+$, by the definition of $\hat \t_{\r}$ we have,
\begin{align*}
 &|\hat{\t}_{\r_1}(q)- \hat{\t}_{\r_2}(q)|\\
 =&|\b[\E_{A}[\hat{V}_{\r_1}(q+A)-\hat{V}_{\r_1}((q-1)^++A)\\
 &\quad-\hat{V}_{\r_2}(q+A)+\hat{V}_{\r_2}((q-1)^++A)]]|\nonumber\\
 \leq& \b\E_A|\hat{V}_{\r_1}(q+A)-\hat{V}_{\r_2}(q+A)|\\
 &\quad+\b\E_A|\hat{V}_{\r_1}((q-1)^++A)-\hat{V}_{\r_2}((q-1)^++A)|\nonumber\\
 \leq& \b\|\hat{V}_{\r_1}-\hat{V}_{\r_2}\|_w\E_A(w(q+A)+w((q-1)^++A))\\
 \leq&  K\|\hat{V}_{\r_1}-\hat{V}_{\r_2}\|_ww(q)
\end{align*}
\end{proof}

\begin{lemma}\label{lem:bellman-bound}
Let $\r\in\mathcal{P}$ and $f_1,f_2\in\mathcal{V}$. Then,
\begin{align}
 \|T_\r f_1-T_\r f_2\|_w\leq \hat K\|f_1-f_2\|_w 
\end{align}
\end{lemma}
\begin{proof}
Using the characterization of $T_\r$ from \cref{eq:Tnew}, we have that, for any $q\in\bb{R}^+$
\begin{align*}
&|T_\r f_1(q)-T_\r f_2(q)|\\
\leq &\b \|f_1-f_2\|K_1w(q)+ \left|\int_{\b\Delta f_2(q)}^{\b \Delta f_1(q)}|\r^{M-1}(u)|~du\right|\\
\leq & \b \|f_1-f_2\|K_1w(q)+\beta |\Delta f_1(q)-\Delta f_2(q)|\\
\leq &\b (K_1+K_2)\|f_1-f_2\|w(q)
\end{align*}
\end{proof}

\subsection{Proofs Pertaining to  \Cref{sec:mfe-proof}-A: Step 2}\label{sec:step2}
\begin{proof}[Proof of \Cref{lem:queuelengthdist}]
For brevity, denote $\Pi_{\r,\t}(\cdot)$ be $\Pi(\cdot)$ and $\Upsilon_{\r,\t}^{(k)}=\Upsilon^{(k)}$ . Let $-\tau$ be the last time before $0$  the chain regenerated. We have
\begin{align}
 \Pi(B)=&\sum_{k=0}^\infty\pr(B,\tau=k)\\
 =&\sum_{k=0}^\infty\pr(\tau=k)\pr(B|\tau=k)
\end{align}
Since the regeneration events are independent of the queue-length and occur geometrically with probability $(1-\b)$, $\pr(\tau=k)=(1-\b)\b^k$. Hence,
\begin{align}
 \Pi(B)=&\sum_{k=0}^\infty(1-\b)\b^k\pr(Q_0\in B|\tau=k)\\
 =&\sum_{k=0}^\infty(1-\b)\b^k\E(\E(\mathbf{1}_{Q_0\in B}|\tau=k,Q_{-k}=Q)|\tau=k)\\
 =&\sum_{k=0}^\infty(1-\b)\b^k\E(\Upsilon^{(k)}(B|Q)|\tau=k)\\
 =&\sum_{k=0}^\infty(1-\b)\b^k\E_{\Psi_R}(\Upsilon^{(k)}(B|Q)).
\end{align}
since $Q_{-k}\sim\Psi_R$ given $\tau=k$.
\end{proof}
\ignore{
\begin{lemma}\label{lem:liminfupsilon}
$\liminf_{n\rightarrow\infty}\Upsilon_{\r_n}^{(k)}(B|q)\geq\Upsilon_{\r}^{(k)}(B|q)$
\end{lemma}
\begin{proof}
 We now prove that $\liminf_{n\rightarrow\infty}\Upsilon_{\r_n}^{(k)}(B|q)\geq\Upsilon_{\r}^{(k)}(B|q)$ for every $q \in \bb{R}^{+}$. In fact we prove a stronger result: if $q_n\rightarrow q$ is any converging sequence, then $\liminf_{n\rightarrow\infty}\Upsilon_{\r_n}^{(k)}(B|q_n)\geq\Upsilon_{\r}^{(k)}(B|q)$ for every $k$.

 We show the above result by mathematical induction on $k$. For $k=0$, we have $\Upsilon_{\r_n}^{(0)}(B|q_n)=\mathbf{1}_{(q_n\in B)}$ and, one can easily check that for any open set $B$, $\liminf_{n\rightarrow\infty}\mathbf{1}_{(q_n\in B)}\geq \mathbf{1}_{(q\in B)}$. Hence, our hypothesis  holds true for $k=0$.  Suppose that the hypothesis is true till $k = m-1$. To prove the lemma, we just need to verify that the hypothesis holds for $k=m$. Verify that $\pr_{q_n,\r_n}(\cdot)\implies \pr_{q,\r}(\cdot)$ by considering the integrals of a bounded continuous function.  Then, by Skorokhod representation theorem, there exists $X_n$ and $X$ on common probability space such that $X_n\sim \pr_{q_n,\r_n}$, $X\sim\pr_{q,\r}$ and $X_n\rightarrow X$ a.s. We have,
 \begin{align}
  \liminf\Upsilon_{\r_n}^{(m)}(B|q_n)=& \liminf \E(\Upsilon_{\r_n}^{(m-1)}(B|X_n))\\
  \geq& \E(\liminf \Upsilon_{\r_n}^{(m-1)}(B|X_n))\label{subeq:fatou}\\
  \geq & \E(\Upsilon_{\r}^{(m-1)}(B|X))\label{subeq:induc}\\
  =&\Upsilon_{\r}^{(m)}(B|q),
 \end{align}
 where \cref{subeq:fatou} follows from Fatou's lemma, and \cref{subeq:induc} follows from the induction hypothesis.
  \end{proof}
}

\begin{lemma}\label{lem:liminfupsilon}
$\liminf_{n\rightarrow\infty}\Upsilon_{\r_n}^{(k)}(B|q)\geq\Upsilon_{\r}^{(k)}(B|q)$
\end{lemma}
\begin{proof}
The proof proceeds through mathematical induction on $k$. For $k=0$, we have $\Upsilon_{\r_n}^{(0)}(B|q)=\mathbf{1}_{(q\in B)}$ and hence the hypothesis holds true. Suppose that the hypothesis is true till $k = m-1$. To prove the lemma, we just need to verify that the hypothesis holds for $k=m$. Let $\pr_{q, \r}(.)$ be the one step transition kernel of the queue dynamics conditioned on the following facts: the initial state is $q$, the bids are generated according to the optimal policy given by Corollary~\ref{cor:optbid} and no regeneration. Verify that $\pr_{q,\r_n}(\cdot)\implies \pr_{q,\r}(\cdot)$ by considering the integrals of a bounded continuous function.  Then, by Skorokhod representation theorem, there exists $X_n$ and $X$ on common probability space such that $X_n\sim \pr_{q,\r_n}$, $X\sim\pr_{q,\r}$ and $X_n\rightarrow X$ a.s. We have,
 \begin{align}
  \liminf\Upsilon_{\r_n}^{(m)}(B|q_n)=& \liminf \E(\Upsilon_{\r_n}^{(m-1)}(B|X_n))\\
  \geq& \E(\liminf \Upsilon_{\r_n}^{(m-1)}(B|X_n))\label{subeq:fatou}\\
  \geq & \E(\Upsilon_{\r}^{(m-1)}(B|X))\label{subeq:induc}\\
  =&\Upsilon_{\r}^{(m)}(B|q),
 \end{align}
 where \cref{subeq:fatou} follows from Fatou's lemma, and \cref{subeq:induc} follows from the induction hypothesis.
  \end{proof}

\subsection{Proofs Pertaining to  \Cref{sec:mfe-proof}-A: Step 3}\label{sec:step3}

\begin{proof}[Details of proof of \Cref{lem:PincludesF}]
To complete the proof, we need to show that the expected bid under the cumulative distribution function $\hat{\r}$ is bounded from above by a constant that is independent of $\hat{\r}$. To that end, define a new Markov random process $\tilde{Q}_{k}$ with the probability transition matrix
\begin{align}\label{eq:probmeas}
 \pr(\tilde{Q}_{k+1}\in B|\tilde{Q}_k=q)=\b\mathbf{1}_{(q+\bar{A}\in B)}+(1-\b)\Psi_R(B)
\end{align}
where $\bar{A}$ is the maximum possible arrival between any two consecutive auction instants. The process $\tilde{Q}_k$ has an invariant distribution which is given by,
\begin{align}
 \tilde{\Pi}(B)=\sum_{k=0}^\infty(1-\b)\b^k\E_{\Psi_R}(\mathbf{1}_{(q+k\hat{A})\in B}).
\end{align}
The proof of the above result is identical to that of Lemma~\ref{lem:queuelengthdist}.
For any $q$ given, the above probability measure~(\ref{eq:probmeas}) stochastically bounds the probability measure in \cref{eq:trprobs}. Therefore, it can be shown that $\tilde{\Pi}$ stochastically dominates $\Pi_{\r}$ for all $\r \in \mathcal{P}$, i.e, $\Pi_\r\preccurlyeq \tilde{\Pi}$.

Now, the expected value of the optimal bid function $\hat{\t}_\r(q)$ under $\Pi_\r$ satisfies,
\begin{align}
 \E_{\Pi_\r}[\hat{\t}_\r(q)] \leq &\E_{\tilde{\Pi}}[\hat{\t}_\r(q)]\\
 \leq& \E_{\tilde{\Pi}} [\hat{V}_\r(q+\bar{A})]\\
 \leq&\sum_{k=0}^\infty(1-\b)\b^k\E_{\Psi_R}(\hat{V}_\r(q+(k+1)\bar{A}))
\end{align}
Above, the first inequality follows from stochastic dominance of $\tilde{\Pi}$ and the second inequality is due to the definition of optimal bid function.

From (\ref{eq:bellman}), we can observe that for any $\r$, $\hat{V}_\r(q)\leq \sum_{k=0}^\infty\b^k C(q+k\bar{A})$ independent of $\r$. Since $C(q) \in O(q^m)$ for some $m$, we have $\hat{V}_\r(q) \in O(q^m)$. Then, $\E_{\Psi_R}(\hat{V}_\r(q+(k+1)\hat{A})) \in O(k^m)$ as the moments of $\Psi_R$ are bounded. This directly gives that  $\E_{\Pi_\r}[\hat{\t}_\r(q)]$ is bounded by the some constant that is independent of $\r$ and, hence independent of $\hat{\r}$.
\end{proof}

\begin{lemma}\label{lem:point-uniform}
In $\mathcal{P}$, pointwise convergence implies uniform convergence.
\end{lemma}
\begin{proof}
Let $\r_n, \r \in \mathcal{P}$ and $\r_n \rightarrow \r$ point-wise. Given $\epsilon>0$, choose $L$ large enough so that $\r(L)>1-\epsilon$. Since $\r$ is continuous function by definition, it is uniformly continuous on the compact set $[0,L]$. Therefore, we can construct a sequence $0=x_1<x_2<\cdots<x_k=L$ such that and $|\r(x_{i+1})-\r(x_i)|<\epsilon$. Let $J$ be large enough so that for all $n>J$, $|\r(x_i)-\r_n(x_i)|<\epsilon$ for all $i$. For any $y$ such that $x_i<y<x_{i+1}$,
  \begin{align}
    &|\r(y)-\r_n(y)|\\
    <& |\r(y)-\r(x_i)|+|\r(x_i)-\r_n(x_i)|\\
    &\quad+|\r_n(y)-\r_n(x_i)|\nonumber\\
    <&|\r(x_{i+1})-\r(x_i)|+|\r(x_i)-\r_n(x_i)|\\
    &\quad+ |\r_n(x_{i+1})-\r_n(x_i)|\nonumber\\
    <&2|\r(x_{i+1})-\r(x_i)|+|\r(x_i)-\r_n(x_i)|+2\epsilon\\
    <&5\epsilon
  \end{align}
  While if $L<y$, then
  \begin{align}
   &|\r(y)-\r_n(y)|\\
   &< |\r(y)-\r_n(L)|+|\r_n(L)-\r(L)|+|\r(y)-\r(L)|\\
  & <1-\r(L)+\epsilon+\epsilon +1-\r(L)\\
   &<4\epsilon.
  \end{align}
Therefore, $|\r(y) -\r_n(y)| < 5\epsilon$ for all $n > J$ and hence $\r_n$ converges to $\r$ uniformly.
\end{proof}

\subsection{Proofs Pertaining to  \Cref{sec:mfe-proof}-B}\label{sec:partB}

\begin{proof}[Proof of \Cref{lem:bound-on-Pi}]
We know that $\Pi([a,b]|\r,\t)=\sum_{k\geq 0}(1-\b)\b^k\E_{\Psi_R}(\Upsilon^{(k)}_{\r}([a.b]|Q_0))$. 
Let $A_k$ be the net arrivals and $D_k$ be the net departures till time $k$. Then,
  \begin{align}
\Upsilon^{(k)}_{\r}([a,b]|Q_0) &= \E(\mathbf{1}_{(Q_0+A_k-D_k\in[a,b])}|Q_0)\\
&=\E( \E(\mathbf{1}_{(Q_0+A_k-D_k\in[a,b])}|D_k,Q_0)|Q_0)\\
&=\E( \E(\mathbf{1}_{(A_k\in[a-Q_0+D_k,b-Q_0+D_k])}|Q_0,D_k)|Q_0)\\
&\leq c_{1} \cdot(b-a).
  \end{align}
The above results hold since the random variable $A_k$ is independent of $Q_0$ and $D_k$ for any $k$ and it has a bounded density function. Therefore, $\E_{\Psi_R}(\Upsilon^{(k)}_{\r}([a.b]|Q_0))\leq c_\cdot(b-a)$ for all $k>0$. For $k=0$, we know that $\Psi_R$ has a bounded density which implies $\Psi_R([a,b])\leq c_1{\psi}\cdot(b-a)$. These two results prove that there is a large enough $c$ such that $\Pi_\r([a,b])<c\cdot(b-a)$.
\end{proof}

\ignore{
\begin{proof}[Proof of \Cref{lem:tight}]

$Pr (\theta_{f} > s)$ and applying the Markov inequality. Fix an $\r\in\mathcal{P}$. Consider the probability transition matrix given by
\begin{align}
 \pr(Q_1\in B|Q_0=q)=\b\mathbf{1}(q+\hat{A}\in B)+(1-\b)\Psi(B)
\end{align}
iven $q$, the above probability measure stochastically bounds the probability measure in \ref{eq:trprobs}. It is not difficult to show that the invariant distributions generated by the two transition probabilities are also stochastically ordered in the same direction. Let $\Pi_\r\preccurlyeq\Pi$ be the invariant probabilities. We also have that
\begin{align}
\Pi(B)=\sum_{k=0}^\infty(1-\b)\b^k\E_{\Psi}(\mathbf{1}(q+k\hat{A})\in B)
\end{align}

Now, the expected mean of $\t_\r^*(q)$ under $\Pi_\r$, $\E_{\Pi_\r}\t_\r(q)$, is
\begin{align}
 \E_{\Pi_\r}\t_\r^*(q)\leq &\E_{\Pi}\t_\r^*(q)\\
 \leq& \E_{\Pi}V_\r^*(q+\hat{A})\\
 \leq&\sum_{k=0}^\infty(1-\b)\b^k\E_{\Psi}(V_\r^*(q+(k+1)\hat{A}))
\end{align}

We know that $V_\r(q)\leq \sum_{k=0}^\infty\b^kh(q+k\hat{A})$ independent of $\r$. Since $h(q)=O(q^m)$ for some $m$, $V_\r^*(q)=O(q^m)$ and hence, if the moments of $\Psi$ are bounded, $\E_{\Psi}(V_\r^*(q+(k+1)\hat{A}))=O(k^m)$. This directly gives that  $\E_{\Pi_\r}\t_\r^*(q)$ is bounded by the same number independent of $\r$.

\end{proof}
}

\end{document}